\documentclass[10pt]{amsproc}
\usepackage{amssymb}
\usepackage{color}
\usepackage{graphicx,color}
\usepackage{amsmath}
\usepackage{amscd} 


\theoremstyle{plain}
 \newtheorem{thm}{Theorem}[section]
 \newtheorem{prop}{Proposition}[section]
 \newtheorem{lem}{Lemma}[section]
 \newtheorem{cor}{Corollary}[section]
\theoremstyle{definition}
 
 \newtheorem{dfn}{Definition}[section]
\theoremstyle{remark}
 \newtheorem{rem}{Remark}[section] 
 \numberwithin{equation}{section}

\renewcommand{\leq}{\leqslant}
\renewcommand{\geq}{\geqslant}

\setlength{\textwidth}{28cc} \setlength{\textheight}{44cc}

\title[Quantum and Spectral Properties of the Labyrinth Model]{QUANTUM AND SPECTRAL PROPERTIES OF THE LABYRINTH MODEL}

\author[Y.\ Takahashi]{\bfseries YUKI TAKAHASHI}

\address{Department of Mathematics, University of California, Irvine, CA~92697, USA}

\email{takahasy@math.uci.edu}

\thanks{Y.\ T. \ was supported in part by NSF grant DMS-1301515 (PI: A.\ Gorodetski).}

\date{today}

\begin{document}

\vspace{18mm}
\setcounter{page}{1}
\thispagestyle{empty}

\begin{abstract}
We consider the Labyrinth model, which is a two-dimensional quasicrystal model. We show that the spectrum of this model, which is known to be a product of two Cantor sets, is an interval for small values of the coupling constant. We also consider the density of states measure of the Labyrinth model, and show that it is absolutely continuous with respect to Lebesgue measure for almost all values of coupling constants in the small coupling regime.
\end{abstract}

\maketitle

\section{Introduction}  
\subsection{Quasicrystal and the Labyrinth model.}
The Fibonacci Hamiltonian is a central model in the study of electronic properties of one-dimensional quasicrystals.
It is given by the following bounded self-adjoint operator in $l^{2}({\mathbb{Z})}$:
\begin{equation}\label{FH}
(H_{\lambda, \beta} \psi)(n) = \psi(n+1) + \psi(n-1) + \lambda \chi_{[1-\alpha, 1)} ( n\alpha + \beta \mod 1 ) \psi (n),
\end{equation}
where $\alpha = \frac{ \sqrt{5} - 1}{2}$ is the \emph{frequency}, 
$\beta \in \mathbb{T} = \mathbb{R}/\mathbb{Z}$ is the \emph{phase}, and $\lambda > 0$ is the \emph{coupling constant}. 
By the minimality of the circle rotation and strong operator convergence, 
the spectrum is easily seen to be independent of $\beta$. 
With this specific choice of $\alpha$, when $\beta = 0$ the potential of (\ref{FH}) coincides with the 
\emph{Fibonacci substitution sequence} (for the precise definition, see section \ref{preliminaries}).
Papers on this model include \cite{DGET}, \cite{DG0902}, \cite{DG11}, \cite{DG12}, \cite{Fibonacci}. 
In \cite{DG11}, the authors showed that for sufficiently small coupling constant, 
the spectrum is a dynamically defined Cantor set, and the \emph{density of states measure} is exact dimensional. 
Later, this result was extended for all values of the coupling constant \cite{Fibonacci}. 

In physics papers, it is more traditional to consider off-diagonal model, but in fact they are known to be very similar.
The operator of the off-diagonal model has the following form:
\begin{equation}\label{ST}
(H_{\omega}\psi)(n) = \omega(n+1) \psi(n+1) + \omega(n) \psi(n-1),
\end{equation}
where the sequence $\omega$ is in the \emph{hull} of the Fibonacci substitution sequence. 
For the precise definition of hull, see (\ref{hull}). 
This sequence takes two positive real values, say $1$ and $a$. 
Let 
\begin{equation}\label{CC}
\lambda = \frac{| a^2 - 1 |}{a}, 
\end{equation}
and call this the \emph{coupling constant}. 
The spectral properties of $H_{\omega}$ do not depend on the particular choice of $\omega$, 
and depend only on the coupling constant $\lambda$. 
Recent mathematics papers discussing this operator include \cite{DG11}, \cite{WM}, \cite{Will}.
In \cite{WM} the authors considered tridiagonal substitution Hamiltonians, 
which include both (\ref{FH}) and (\ref{ST}) as special cases.
\begin{centering}
\begin{figure}[t]
\includegraphics[scale=1.00]{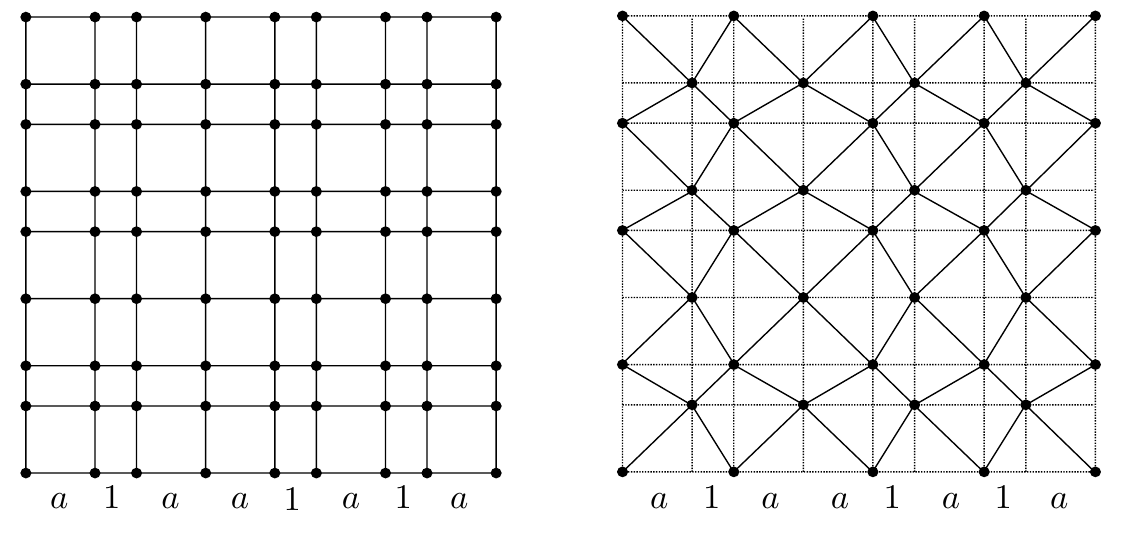}
\caption{The square tiling (left) and the Labyrinth model (right).}
\label{squaretiling}
\end{figure}
\end{centering}

It is natural to consider higher dimensional models, but that is known to be extremely difficult.
To get an idea of spectral properties of higher dimensional quasicrystals, simpler models have been considered.
In two dimensional case, we have, for example, the square Fibonacci Hamiltonian \cite{DG13},  
the square tiling, and 
the Labyrinth model. The Labyrinth model is the main subject of this paper.
These models are separable, so the existing results of one-dimensional models can be applied in the study of their spectral properties.

The square Fibonacci Hamiltonian is constructed by two copies of the Fibonacci Hamiltonian. Namely, 
this operator acts on $l^{2}( \mathbb{Z}^{2} )$, and is given by 
\begin{equation*}
\begin{aligned}
\left[ H_{\lambda_{1}, \lambda_{2}, \beta_{1}, \beta_{2}} \psi \right] (m, n) = 
\psi(m+1, n) + \psi(m-1, n) + \psi(m, n+1) + \psi(m, n-1)  \\
\hspace{1mm} + \left( \lambda_{1} \chi_{[1-\alpha, 1)} ( m \alpha + \beta_{1} \mod 1) + 
\lambda_{2}\chi_{[1-\alpha, 1)}( n \alpha + \beta_{2} \mod 1 ) \right) \psi(m, n),
\end{aligned}
\end{equation*}
where $\alpha = \frac{\sqrt{5} - 1}{2}$, $\beta_{1}, \beta_{2} \in \mathbb{T}$, and $\lambda_{1}, \lambda_{2} > 0$.
It is known that the spectrum of this operator is given by the sum of the spectra of the one-dimensional models, 
and the density of states measure of this operator is 
the convolution of the density of states measures of the one-dimensional models. See, for example, the appendix in \cite{DG13}.
Recently, it was shown that for small coupling constants the spectrum of the square Fibonacci Hamiltonian is an interval \cite{DG11}.
Furthermore, it was shown that for almost all pairs of the coupling constants, 
the density of states measure is absolutely continuous with respect to Lebesgue measure in weakly coupled regime \cite{DG13}.

The square tiling is constructed by two copies of off-diagonal models. The operator acts on $l^{2}(\mathbb{Z}^2)$, and is given by 
\begin{equation*}
\begin{aligned}
\left[H_{\omega_{1}, \omega_{2}}\psi \right] (m, n) = \omega_{1}(m+1) &\psi(m+1, n) + \omega_{1}(m) \psi(m-1, n)  \\
&\hspace{2mm} + \omega_{2}(n+1) \psi(m, n+1) + \omega_{2}(n) \psi(m, n-1),
\end{aligned}
\end{equation*}
where the sequences $\omega_{1}$ and $\omega_{2}$ are in the hull of the Fibonacci substitution sequence. 
All vertices are connected horizontally and vertically. 
See Figure \ref{squaretiling}.
It has been mainly studied numerically by physicists (e.g., \cite{EL}, \cite{ILED}, \cite{Lifshitz}).  
By repeating the argument from \cite{DG13}, 
one can show that the analogous results of the square Fibonacci Hamiltonian hold for the 
square tiling. Recently, \cite{FTY} considered the square tridiagonal Fibonacci Hamiltonians, which 
include the square Fibonacci Hamiltonian and the square tiling as special cases. 
 
 The operator of the Labyrinth model is given by:
\begin{equation}\label{Labyrinth}
\begin{aligned}
\left[ \hat{H}_{\omega_{1}, \omega_{2}} \psi \right] (m, n) &=  \omega_{1}(m+1) \omega_{2}(n+1) \psi( m+1, n+1 ) \\
&\hspace{10mm}+\omega_{1}(m+1) \omega_{2}(n) \psi( m+1, n-1 ) \\
&\hspace{14mm}+ \omega_{1}(m) \omega_{2}(n+1) \psi( m-1, n+1 ) \\
&\hspace{24mm} + \omega_{1}(m) \omega_{2}(n) \psi( m-1, n-1 ),
\end{aligned}
 \end{equation}
 where the sequences $\omega_{1}$ and $\omega_{2}$ are in the hull of the Fibonacci substitution sequence. 
 It is constructed by two copies of off-diagonal models.
 All vertices are connected diagonally, and the strength 
of the bond is equal to the product of the sides of the rectangle. See Figure \ref{squaretiling}. 
Compare with Figure 2 from \cite{Sire}. 
Without loss of generality, we can assume that $\omega_1$ and $\omega_2$ take values in $\{1, a_1\}$ and $\{1, a_2\}$, respectively. 
We denote the corresponding coupling constants by $\lambda_1$ and $\lambda_2$. 
It can be shown that the spectral properties do not depend on the specific choice of $\omega_{1}$ and $\omega_{2}$, and only depend on 
the coupling constants $\lambda_1$ and $\lambda_2$.  
Unlike the square Fibonacci Hamiltonian or the square tiling, the spectrum is the product (not the sum) of the spectra of the two one-dimensional models, and the density of states measure is not the convolution of the density of states measures of the one-dimensional models. 
This model was suggested in the late 1980s in \cite{Sire}, and 
so far this has been studied mostly by physicists, and their work is mainly relied on numerics \cite{CM}, \cite{RTS}, \cite{Sire}, \cite{Sire2}, \cite{TS}, \cite{TS1}, \cite{TS2}, \cite{TS3}, \cite{TS4}, \cite{YGRS}.
Sire considered this model in \cite{Sire} and 
the numerical experiments suggested that 
the density of states measure is absolutely continuous for small coupling constants 
and singular continuous for large coupling constants.
By a heuristic argument, the author also estimated the critical value of which the transition from zero measure spectrum to positive measure spectrum occurs, and showed that it agrees with numerical experiment. 
In some papers, other substitution sequences, e.g., silver mean sequence or bronze mean sequence, are 
 used to define the Labyrinth model. We consider more general cases in this paper, 
 and give rigorous proofs to some of the physicists' conjectures. 
In physicists' work, the coupling constants of two substitution sequences $\omega_{1}$ and $\omega_{2}$
are set as equal, 
but we consider the case that they may be different. 
We denote the spectrum of (\ref{Labyrinth}) by $\hat{\Sigma}_{\lambda_1, \lambda_2}$, and 
the density of states measures of (\ref{ST}) and (\ref{Labyrinth}) by  
$\nu_{\lambda}$ and $\hat{\nu}_{\lambda_{1}, \lambda_{2}}$, respectively. 
The following theorems are the main results of this paper.
\begin{thm}\label{mainnnnntheorem}
The spectrum $\hat{\Sigma}_{\lambda_1, \lambda_2}$
is a Cantor set of zero Lebesgue measure for sufficiently large coupling constants and 
is an interval for sufficiently small coupling constants. 
\end{thm}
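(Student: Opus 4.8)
The plan is to start from the (known) fact that the Labyrinth operator factors as a tensor product, $\hat H_{\omega_1,\omega_2}=H_{\omega_1}\otimes H_{\omega_2}$, so that
$\hat\Sigma_{\lambda_1,\lambda_2}=\Sigma_{\lambda_1}\cdot\Sigma_{\lambda_2}=\{xy:x\in\Sigma_{\lambda_1},\,y\in\Sigma_{\lambda_2}\}$,
the arithmetic product of the two one-dimensional spectra. I will use the known structure of the off-diagonal Fibonacci spectrum $\Sigma_\lambda$: it is compact, perfect, symmetric ($\Sigma_\lambda=-\Sigma_\lambda$), with $0\in\Sigma_\lambda$; it is a dynamically defined Cantor set whose Hausdorff dimension coincides with its box dimension and tends to $0$ as $\lambda\to\infty$; and its thickness tends to $\infty$ as $\lambda\to0$.

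For large coupling I would argue by a dimension count. The map $(x,y)\mapsto xy$ is smooth, hence Lipschitz on the bounded set $\Sigma_{\lambda_1}\times\Sigma_{\lambda_2}$, so
$\dim_H\hat\Sigma_{\lambda_1,\lambda_2}\le\dim_H(\Sigma_{\lambda_1}\times\Sigma_{\lambda_2})\le\dim_H\Sigma_{\lambda_1}+\overline{\dim}_B\Sigma_{\lambda_2}=\dim_H\Sigma_{\lambda_1}+\dim_H\Sigma_{\lambda_2}$,
which is strictly less than $1$ once $\lambda_1,\lambda_2$ are large. Hence $\hat\Sigma_{\lambda_1,\lambda_2}$ has zero Lebesgue measure; having dimension below $1$ in $\mathbb R$ it contains no interval and is therefore totally disconnected, while a short argument (using that each $\Sigma_{\lambda_i}$ is perfect and contains $0$, so every product point $x_1x_2$ is approximated by distinct product points, treating $x_1x_2=0$ separately) shows it has no isolated points. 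A nonempty compact, perfect, totally disconnected subset of $\mathbb R$ is a Cantor set, which proves the first half.

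For small coupling the goal is $\hat\Sigma_{\lambda_1,\lambda_2}=[-M_1M_2,M_1M_2]$ with $M_i=\max\Sigma_{\lambda_i}$. By symmetry it suffices to show $\Sigma_{\lambda_1}^+\cdot\Sigma_{\lambda_2}^+=[0,M_1M_2]$, where $\Sigma_{\lambda_i}^+=\Sigma_{\lambda_i}\cap[0,\infty)$, and taking logarithms converts this into a sumset problem: $(\Sigma_{\lambda_1}^+\cdot\Sigma_{\lambda_2}^+)\smallsetminus\{0\}=\exp\!\big(L(\Sigma_{\lambda_1})+L(\Sigma_{\lambda_2})\big)$, where $L(\Sigma_\lambda):=\log\!\big(\Sigma_\lambda^+\smallsetminus\{0\}\big)$ is a closed subset of $(-\infty,\log M_i]$, unbounded below. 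I would then run the Newhouse gap lemma: on any compact interval bounded away from $-\infty$ the logarithm has bounded distortion, so $L(\Sigma_\lambda)$ inherits large local thickness from $\Sigma_\lambda$; since the thicknesses multiply to more than $1$ for small $\lambda_1,\lambda_2$, and for every $s<\log M_1+\log M_2$ the convex hulls of $L(\Sigma_{\lambda_1})$ and $s-L(\Sigma_{\lambda_2})$ are linked (one bounded above and unbounded below, the other bounded below and unbounded above, overlapping nondegenerately), the gap lemma — applied after truncating both sets to a compact window containing the overlap — forces $L(\Sigma_{\lambda_1})\cap(s-L(\Sigma_{\lambda_2}))\ne\emptyset$. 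Thus $L(\Sigma_{\lambda_1})+L(\Sigma_{\lambda_2})=(-\infty,\log M_1+\log M_2]$; exponentiating and adjoining $0=0\cdot0$ gives $\Sigma_{\lambda_1}^+\cdot\Sigma_{\lambda_2}^+=[0,M_1M_2]$, and hence $\hat\Sigma_{\lambda_1,\lambda_2}=[-M_1M_2,M_1M_2]$.

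The main obstacle is the behaviour near $E=0$: when $s$ is very negative the interchange of the two sets above takes place far out towards $-\infty$, where $L(\Sigma_\lambda)$ is the logarithm of the part of $\Sigma_\lambda^+$ near $0$, and one must know that the local thickness there does not degenerate (indeed, tends to $\infty$ as $\lambda\to0$) — this is not automatic because the logarithm degenerates at $0$. I expect to handle this via the trace-map renormalization: the piece of $\Sigma_\lambda$ surrounding $0$ renormalizes onto a larger piece with distortion bounded uniformly in $\lambda$, so $L(\Sigma_\lambda)$ is asymptotically periodic at $-\infty$ with a controlled fundamental domain and its thickness there is bounded below in terms of $\tau(\Sigma_\lambda)$. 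Making these uniform thickness estimates near $E=0$ precise, and tracking their dependence on $\lambda$, is the technical core; once it is in place, the Cantor-set combinatorics above complete the argument.
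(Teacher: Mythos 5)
Your reduction to the arithmetic product $\hat\Sigma_{\lambda_1,\lambda_2}=\Sigma_{\lambda_1}\cdot\Sigma_{\lambda_2}$ is exactly the paper's starting point (Proposition \ref{prop1}), and your large-coupling half is correct and essentially equivalent to the paper's: the paper passes to $\log\Sigma_{\lambda_i}^+$ and invokes Proposition 1 of Chapter 4 of \cite{PalisTakens} for sums of Cantor sets of small dimension, while you bound $\dim_H$ of the product directly via the Lipschitz map $(x,y)\mapsto xy$ and the identity of Hausdorff and box dimensions for dynamically defined Cantor sets; both yield zero Lebesgue measure, and your perfectness/total-disconnectedness check is fine.

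The small-coupling half, however, has a genuine gap, and it is precisely the one you flag yourself. The Newhouse gap lemma with $\tau(\Sigma_{\lambda_1})\tau(\Sigma_{\lambda_2})>1$ would suffice if the logarithm carried thickness over with bounded distortion, but it does not near $0$: a gap $(a,b)$ of $\Sigma_\lambda^+$ with $b/a$ large has logarithmic length $\log(b/a)$ while its adjacent bridge $[b,c]$ has logarithmic length only $\log(c/b)\approx\log(1+\tau\,(b-a)/b)$, so the gap-to-bridge ratios of $L(\Sigma_\lambda)$ can degrade without bound as one moves toward $-\infty$, no matter how large $\tau(\Sigma_\lambda)$ is. Your proposed repair --- that the piece of $\Sigma_\lambda$ around $E=0$ renormalizes with uniformly bounded distortion so that $L(\Sigma_\lambda)$ is asymptotically periodic with controlled thickness at $-\infty$ --- is stated as an expectation, not proved, and it is not a routine verification; it is the technical heart of the statement. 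The paper does not attempt it: it instead quotes Theorem 1.4 of \cite{YT} (``Products of two Cantor sets''), a standalone geometric result asserting that the product of two Cantor sets is an interval once both thicknesses exceed $1+\sqrt{2}$, and then feeds in Theorem \ref{thicknesstheorem} ($\tau(\Sigma_\lambda)\to\infty$ as $\lambda\downarrow 0$). The appearance of the threshold $1+\sqrt{2}$, strictly stronger than the naive Newhouse condition $\tau_1\tau_2>1$, is itself evidence that the degeneration near $0$ is a real obstruction and that some substitute for the missing uniform local-thickness estimate is unavoidable. As written, your argument establishes the interval statement only modulo that unproved estimate; to close it you would either have to prove your renormalization claim or invoke a product-of-Cantor-sets theorem of the type in \cite{YT}.
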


\begin{thm}\label{maintheorem2}
For any $E \in \mathbb{R}$, 
\begin{equation}\label{density1}
\hat{\nu}_{\lambda_{1}, \lambda_{2}} \left( (-\infty, E] \right) = 
\iint_{\mathbb{R}^{2} } \chi_{(-\infty, E]}(xy) \, d\nu_{\lambda_{1}} (x)d\nu_{\lambda_{2}}(y).
\end{equation}
The density of states measure $\hat{\nu}_{\lambda_{1}, \lambda_{2} }$ 
is singular continuous for sufficiently large coupling constants. 
Furthermore, there exists $\lambda^{\ast} > 0$ such that for almost every pair 
$(\lambda_{1}, \lambda_{2}) \in [0, \lambda^{\ast}) \times [0, \lambda^{\ast})$, the density of states measure 
$\hat{\nu}_{\lambda_{1}, \lambda_{2} }$ is absolutely continuous with respect to 
Lebesgue measure.   
\end{thm}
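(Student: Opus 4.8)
The plan is to exploit the fact that the Labyrinth operator is an honest tensor product of two one-dimensional off-diagonal operators, and to push the resulting multiplicative structure through to the spectrum and to the density of states. First I would observe, by a direct comparison of matrix elements, that $\hat H_{\omega_1,\omega_2}=H_{\omega_1}\otimes H_{\omega_2}$, acting on $l^2(\mathbb Z)\otimes l^2(\mathbb Z)=l^2(\mathbb Z^2)$, with $H_\omega$ as in (\ref{ST}). Restricting $\hat H_{\omega_1,\omega_2}$ to a box $\{1,\dots,L_1\}\times\{1,\dots,L_2\}$ with Dirichlet boundary conditions respects the tensor decomposition, so the truncated operator is $H^{(L_1)}_{\omega_1}\otimes H^{(L_2)}_{\omega_2}$ and its $L_1L_2$ eigenvalues are exactly the products of the eigenvalues of the two factors. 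Consequently the finite-volume eigenvalue-counting measure of $\hat H_{\omega_1,\omega_2}$ is the image, under $M(x,y)=xy$, of the product of the two one-dimensional counting measures. Letting $L_1,L_2\to\infty$ and using the existence and $\omega$-independence of the one-dimensional densities of states together with continuity of push-forward under $M$ on a fixed compact box, one obtains $\hat\nu_{\lambda_1,\lambda_2}=M_*(\nu_{\lambda_1}\times\nu_{\lambda_2})$, which is precisely (\ref{density1}); in particular $\hat\Sigma_{\lambda_1,\lambda_2}=\Sigma_{\lambda_1}\cdot\Sigma_{\lambda_2}$.

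Since $\nu_\lambda$ has no atoms, the product measure $\nu_{\lambda_1}\times\nu_{\lambda_2}$ assigns zero mass to every vertical line, every horizontal line, and, by Fubini, to every hyperbola $\{xy=c\}$; hence $\hat\nu_{\lambda_1,\lambda_2}$ has no atoms. For sufficiently large coupling constants Theorem \ref{mainnnnntheorem} gives $|\hat\Sigma_{\lambda_1,\lambda_2}|=0$, so $\hat\nu_{\lambda_1,\lambda_2}$ is carried by a Lebesgue-null set and is singular; combined with the absence of atoms, it is singular continuous.

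The weakly coupled case is the heart of the theorem, and the argument follows the strategy of \cite{DG13}. I would first fix $\lambda^\ast$ small enough that $\dim_H\nu_\lambda>\tfrac12$ for every $\lambda<\lambda^\ast$; this is possible since $\dim_H\nu_\lambda\to1$ as $\lambda\to0$ by \cite{DG11}, and $\nu_\lambda$ is exact dimensional, so one has the usual finite-energy estimates at exponents below $\dim_H\nu_\lambda$. Next I would pass from the multiplicative to the additive setting away from the origin: on each region $\{\delta\le|x|\le\delta^{-1}\}$ the map $x\mapsto\log|x|$ is a bi-Lipschitz diffeomorphism on each sign component, hence transports absolute continuity in both directions, and the restriction of $M_*(\nu_{\lambda_1}\times\nu_{\lambda_2})$ to $\{\delta\le|xy|\le\delta^{-1}\}$ becomes an ordinary additive convolution of bounded truncations of the logarithmic images of $\nu_{\lambda_1}$ and $\nu_{\lambda_2}$, each still of dimension exceeding $\tfrac12$. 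I would then invoke a Peres--Schlag-type transversality argument for the analytic one-parameter family $\lambda_2\mapsto\nu_{\lambda_2}$: since the hyperbolic set of the Fibonacci trace map and its coding maps depend analytically and transversally on the coupling constant (in the spirit of \cite{DG11}, \cite{Fibonacci}), for fixed $\lambda_1$ and Lebesgue-a.e.\ $\lambda_2$ the relevant convolution has an $L^2$ density, the $L^2$ bound coming from an averaged energy estimate that is finite precisely because the two dimensions sum to more than $1$. A Fubini argument in $(\lambda_1,\lambda_2)$ promotes this to a.e.\ pair; finally, exhausting $\mathbb R\setminus\{0\}$ by the regions $\{\delta\le|xy|\le\delta^{-1}\}$ as $\delta\to0$ and using $\hat\nu_{\lambda_1,\lambda_2}(\{0\})=0$ upgrades absolute continuity away from the origin to $\hat\nu_{\lambda_1,\lambda_2}\ll\mathrm{Leb}$ on all of $\mathbb R$.

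I expect the main obstacle to be the transversality input: rigorously verifying, with constants uniform over the weakly coupled regime, that the parametrized family $\{\nu_\lambda\}$ of densities of states of the off-diagonal Fibonacci operator (equivalently, the associated Gibbs measures on the hyperbolic set of the trace map) depends on $\lambda$ in the non-degenerate way required by the Peres--Schlag machinery. A secondary, more technical nuisance is that here the convolution is multiplicative rather than additive and $0$ belongs to the spectrum, so one cannot work globally in logarithmic coordinates and must patch together the behaviour near the origin by the limiting argument above.
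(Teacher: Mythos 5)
Your proposal is correct and follows essentially the same route as the paper: the tensor-product/finite-volume eigenvalue-counting argument for (\ref{density1}), zero Lebesgue measure of the spectrum for singular continuity at large coupling, and, for weak coupling, a logarithmic change of variables away from $0$ reducing the multiplicative convolution to an additive one of measures of dimension exceeding $\tfrac12$, handled by a Peres--Schlag-type transversality theorem plus Fubini in $(\lambda_1,\lambda_2)$. The only real difference is organizational: where you propose to establish the transversality/energy estimate yourself (and rightly flag it as the hard step), the paper imports it wholesale as Theorem 3.2 of \cite{DG13}, and it decomposes $\Sigma_{\lambda_i}\cap(0,\infty)$ into countably many Cantor pieces rather than truncating to annuli $\{\delta\le|xy|\le\delta^{-1}\}$ --- two equivalent ways of dealing with the accumulation of the spectrum at the origin.
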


\subsection{Structure of the paper}
In section 2, we introduce metallic mean sequences and prove some lemmas. 
We then define off-diagonal model and discuss necessary results. In section 3 we define the Labyrinth model, and using the results 
in section 2, we prove Theorem 1.1 and 1.2.

\section{Preliminaries}\label{preliminaries}

\subsection{Linearly recurrent sequences}
We recall some basic facts about subshifts over two symbols. 

An \emph{alphabet} is a finite set of symbols called \emph{letters}.  
A \emph{word} on $\mathcal{A}$ is a finite nonempty sequence of letters. Write $\mathcal{A}^{+}$ for the set of words. 
For $u = u_{1} u_{2} \cdots u_{n} \in \mathcal{A}^{+}$, $|u| = n$ is the \emph{length} of $u$. 
Define the \emph{shift} $T$ on $\mathcal{A}^{\mathbb{Z}}$ by 
\begin{equation*}
(Tx)_{n} = x_{n+1}. 
\end{equation*}
Assume that $\mathcal{A}^{\mathbb{Z}}$ is equipped with the product topology. 
A \emph{subshift} $(X, T)$ on an alphabet $\mathcal{A}$ is a closed 
$T$-invariant subset $X$ of $\mathcal{A}^{\mathbb{Z}}$, endowed with 
the restriction of $T$ to $X$, which we denote again by $T$. 
Given $u = u_1 u_2 \cdots u_n \in \mathcal{A}^{+}$ and an interval $J = \{i, \cdots, j\} \subset \{1, 2, \cdots, n \}$, 
we write $u_J$ to denote the word $u_i u_{i+1} \cdots u_{j}$. A \emph{factor} of 
$u$ is a word $v$ such that $v = u_{J}$ for some interval $J \subset \{ 1, 2, \cdots, n \}$. 
We extend this definition in obvious way to $u \in \mathcal{A}^{\mathbb{Z}}$.  
The \emph{language} $\mathcal{L}(X)$ of a subshift 
$(X, T)$ is the set of all words that are factors of at least one element of $X$.
\begin{dfn}
Let $(X, T)$ be a subshift. We say that $x \in X$ is \emph{linearly recurrent} if there exists a constant $K > 0$ such that 
for every factor $u, v$ of $x$, $K |u| < |v|$ implies that $u$ is a factor of $v$.  
\end{dfn}
We say that a subshift is \emph{linearly recurrent} if it is minimal and contains a linearly recurrent sequence. 
Note that if a subshift is linearly recurrent, then by minimality, all sequences belonging to $X$ are linearly recurrent. 

\subsection{Metallic mean sequence }
Let $\mathcal{A} = \{ a, b \}$ be an alphabet, and consider the following substitution:
\begin{equation*}
\mathcal{P}_{s} : 
\begin{cases}
a \longrightarrow a^{s}b \\
b \longrightarrow a,
\end{cases}
\end{equation*}
where $s$ is a positive integer. 
Consider the iteration of $\mathcal{P}_{s}$ on $a$. 
For example, if $s=1$, 
\begin{equation*}
a \longrightarrow ab \longrightarrow aba \longrightarrow abaab \longrightarrow  abaababa \longrightarrow \cdots.
\end{equation*}
Let us write the $n$th iteration as $\mathcal{C}_s(n)$. 
It is easy to see that  
\begin{equation*}
\mathcal{C}_{s}(n+1) = ( \mathcal{C}_{s}(n) )^{s} \mathcal{C}_{s}(n - 1).
\end{equation*} 
Therefore, for any $s \in \mathbb{N}$ we can define a sequence $\{ u_{s}(k) \}_{k=1}^{\infty}$ by 
$u_{s} = \lim_{n \to \infty} \mathcal{C}_{s}(n)$. 
They are called \emph{metallic mean sequences}. In particular, 
when $s=1$, it is called the \emph{Fibonacci substitution sequence} or \emph{golden mean sequence}. When $s= 2, 3$, 
they are called the \emph{silver mean sequence} and \emph{bronze mean sequence}, respectively.

We define the \emph{hull} $\Omega^{(s)}_{a, b}$ of $u_{s}$ by 
\begin{equation}\label{hull}
\Omega^{(s)}_{a, b} = \left\{  \omega \in \{ a, b\}^{\mathbb{Z}} \mid 
\text{ every factor of $\omega$ is a factor of $u_{s}$}  \right\}.
\end{equation}
It is well known that $\Omega^{(s)}_{a, b}$ is compact and $T$-invariant and $(\Omega^{(s)}_{a, b}, T)$ is linearly recurrent. 
See for example, \cite{Martine} and references therein.

\begin{rem}

Let us define a \emph{rotation sequence} $v_{a, b, s, \beta}$ by 
\begin{equation*}\label{rotationsequence}
v_{a, b, s, \beta}(n) = 
\begin{cases}
a & \text{ \ if \ } n \alpha + \beta \mod 1  \in [1 - \alpha, 1) \\
b & \text{ \ o.w.}, 
\end{cases}
\end{equation*}
where $\alpha$ is given by 
\begin{equation*}\label{alpha1}
\begin{aligned}
\alpha = \cfrac{1}{ 1 + s + \cfrac{1}{s + \cfrac{1}{s + \cfrac{1}{\ddots}  } } } = \frac{ s + 2 - \sqrt{s^2 + 4} }{2s}. 
\end{aligned}
\end{equation*}
It is easy to see that the potential of the Fibonacci Hamiltonian (\ref{FH}) is $v_{\lambda, 0, 1, \beta}$. 
It is well known that $v_{a, b, s, 0} = u_s$, so there is no need to distinguish the rotation sequence and substitution sequence. 
However, it seems that it is more common to use the rotation sequence in the definition of the 
on-diagonal model and use the substitution sequence 
in the definition of the off-diagonal model. 
It is also known that 
\begin{equation*}
\Omega^{(s)}_{a, b} = \bigcup_{\beta \in \mathbb{T}} v_{a, b, s, \beta}.  
\end{equation*}
See, for example \cite{M}. 
\end{rem}

\begin{rem}\label{rem1}
There seems to be a minor confusion about substitution sequences and rotation sequences in some papers. 
Let 
\begin{equation*}\label{alpha*}
\alpha^{*} = \cfrac{1}{ s + \cfrac{1}{s + \cfrac{1}{s + \cfrac{1}{\ddots}  } } }. 
\end{equation*}
Using $\alpha^{*}$, define $v^{*}_{a, b, s, \beta}$ analogously. In some papers 
it is stated that $v^{*}_{a, b, s, 0} \in \Omega^{(s)}_{a, b}$, 
but this is obviously not true. 
What is true is that $v^{*}_{a, b, 1, 0} = v_{b, a, 1, 0}$, so when $s = 1$ there is no actual harm. 
\end{rem}

We simply write $\Omega^{(s)}_{a, b}$ as $\Omega^{(s)}$ below when there is no chance of confusion.

\subsection{Necessary results}
We will need the following definition and subsequent lemmas later.
\begin{dfn}
Let $(X, T)$ be a linearly recurrent subshift, and let $x, y \in \mathcal{L}(X)$. If  
there exist disjoint intervals $J_{1}$ and $J_{2}$ such that 
\begin{equation*}
\begin{aligned}
1) & \ J_{i} \subset \{ 1, 2, \cdots, |x|\} \ \text{ for } i= 1, 2, \\
2) & \ J_{1} = J_{2} + k \ \text{  for some odd number } k, \text{ and}  \\  
3) & \ x_{J_{1}} = x_{J_{2}} = y, 
\end{aligned}
\end{equation*}
we say that $y$ is \emph{odd-twin} in $x$. Define \emph{even-twin} analogously.  
For example, in the case of the subshift $\left( \Omega^{(1)}, T \right)$, $ab$ is odd-twin in $abaab$, 
and even-twin in $abab$.
\end{dfn}

\begin{lem}\label{twinlemma}
For any $k \geq 1$, there exists $x \in \mathcal{L}(\Omega^{(s)})$ such that $| x | \leq 3 | \mathcal{C}_{s}(k) |$ and 
$\mathcal{C}_{s}(k)$ is odd-twin in $x$.
\end{lem}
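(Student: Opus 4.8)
To build an $x$ in which $\mathcal{C}_s(k)$ appears as an odd-twin, the natural strategy is to locate two occurrences of $\mathcal{C}_s(k)$ inside some fixed-point word $\mathcal{C}_s(m)$ for suitable $m$, record the gap $k$ between their left endpoints, and then cut out a subword of $\mathcal{C}_s(m)$ covering both occurrences whose length is at most $3|\mathcal{C}_s(k)|$. The recursion $\mathcal{C}_s(m+1) = (\mathcal{C}_s(m))^s \mathcal{C}_s(m-1)$ is exactly what produces repeated occurrences: already in $\mathcal{C}_s(k+1) = (\mathcal{C}_s(k))^s \mathcal{C}_s(k-1)$, if $s \geq 2$ the word $\mathcal{C}_s(k)$ occurs at positions $1$ and $|\mathcal{C}_s(k)|+1$, so the two occurrences are adjacent and the gap is exactly $|\mathcal{C}_s(k)|$. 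The parity of the gap is the crux: if $|\mathcal{C}_s(k)|$ is odd we are essentially done after taking $x = \mathcal{C}_s(k)\mathcal{C}_s(k)$ (length $2|\mathcal{C}_s(k)| \leq 3|\mathcal{C}_s(k)|$), after checking that $\mathcal{C}_s(k)\mathcal{C}_s(k)$ is indeed a factor of some element of $\Omega^{(s)}$, which it is since it is a prefix of $\mathcal{C}_s(k+1)$ when $s\geq 2$, and a factor of $\mathcal{C}_s(k+2) = (\mathcal{C}_s(k+1))^s\mathcal{C}_s(k)$ contains $\mathcal{C}_s(k+1)\mathcal{C}_s(k) \supset (\mathcal{C}_s(k))\mathcal{C}_s(k-1)(\mathcal{C}_s(k))$—more carefully, $\mathcal{C}_s(k+1)\mathcal{C}_s(k+1)$ is a factor of $\mathcal{C}_s(k+3)$, hence $\mathcal{C}_s(k)\mathcal{C}_s(k)$ is too since $\mathcal{C}_s(k)$ is a prefix of $\mathcal{C}_s(k+1)$ and... one needs two consecutive $\mathcal{C}_s(k)$ blocks, which appear inside $(\mathcal{C}_s(k+1))^2$ or directly inside $\mathcal{C}_s(k+2)$.

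**Handling the parity.** The real work is the case $|\mathcal{C}_s(k)|$ even. Write $p_s(k) = |\mathcal{C}_s(k)|$; these satisfy $p_s(k+1) = s\, p_s(k) + p_s(k-1)$ with $p_s(0) = p_s(1) = 1$ (or an analogous normalization). One then analyzes the sequence $p_s(k) \bmod 2$: it is eventually periodic, and for each $s$ one checks whether consecutive values can both be even, whether the gap can be made odd by going one more level. The point is that we do not need the two occurrences of $\mathcal{C}_s(k)$ to be adjacent—we need their left-endpoint gap to be odd and the total span to be $\leq 3 p_s(k)$. Inside $\mathcal{C}_s(k+2) = (\mathcal{C}_s(k+1))^s \mathcal{C}_s(k)$ with $\mathcal{C}_s(k+1) = (\mathcal{C}_s(k))^s\mathcal{C}_s(k-1)$, the word $\mathcal{C}_s(k)$ occurs at left endpoints $1, p_s(k)+1, 2p_s(k)+1, \ldots$ and also, since $\mathcal{C}_s(k)$ is a prefix of $\mathcal{C}_s(k+1)$, at the start of the second $\mathcal{C}_s(k+1)$ block, i.e. at position $p_s(k+1)+1$. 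So the available gaps between occurrences of $\mathcal{C}_s(k)$ include every multiple of $p_s(k)$ up to roughly $s\,p_s(k)$ plus $p_s(k-1)$ (from the $\mathcal{C}_s(k-1)$ block, which itself is a suffix that overlaps), and also gaps of size $p_s(k+1) = s\,p_s(k)+p_s(k-1)$ and $p_s(k-1)$. Among $\{p_s(k),\ p_s(k-1),\ p_s(k)+p_s(k-1)\}$, at least one is odd unless $p_s(k)$ and $p_s(k-1)$ are both even—but $\gcd(p_s(k),p_s(k-1))=1$ by the recursion, so they are never both even. Hence one of these three gaps is odd, and each is $< 2p_s(k)$ (using $p_s(k-1) < p_s(k)$), so the span of the two occurrences is $< 3p_s(k)$.

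**Assembling the bound.** So the plan is: (1) establish $\gcd(p_s(k), p_s(k-1)) = 1$ from the recursion and coprimality of initial terms; (2) locate, inside $\mathcal{C}_s(k+1)$ or $\mathcal{C}_s(k+2)$, occurrences of $\mathcal{C}_s(k)$ at left endpoints differing by $p_s(k)$, by $p_s(k-1)$, and by $p_s(k)+p_s(k-1) = p_s(k+1)$ — the first two from the blocks $(\mathcal{C}_s(k))^s$ and from $\mathcal{C}_s(k)$ being a prefix of $\mathcal{C}_s(k+1) = (\mathcal{C}_s(k))^s\mathcal{C}_s(k-1)$ and a suffix of... one must verify these occurrences honestly using the substitution structure; (3) pick the one whose gap is odd and extract the subword of $\mathcal{C}_s(k+1)$ (or $\mathcal{C}_s(k+2)$) spanning both occurrences, whose length is at most $p_s(k) + (\text{odd gap}) \leq p_s(k) + (p_s(k)+p_s(k-1)-1) < 3p_s(k)$, since $p_s(k-1) \le p_s(k)$. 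The subword lies in $\mathcal{L}(\Omega^{(s)})$ because it is a factor of $\mathcal{C}_s(k+2)$, and every $\mathcal{C}_s(m)$ is a prefix of $u_s$ hence its factors lie in $\mathcal{L}(\Omega^{(s)})$.

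**Main obstacle.** The delicate point is step (2): verifying that $\mathcal{C}_s(k)$ genuinely occurs at the claimed positions — in particular the occurrence with gap $p_s(k-1)$, which requires knowing something like "$\mathcal{C}_s(k-1)$ is a prefix of $\mathcal{C}_s(k)$" (true, by induction on the recursion) so that the tail $\mathcal{C}_s(k-1)$ of $\mathcal{C}_s(k+1)$, followed by the next letters, reconstitutes a copy of $\mathcal{C}_s(k)$; this overlap argument is where one has to be careful that no spurious alignment issue arises and that the two intervals $J_1, J_2$ are genuinely \emph{disjoint} (the definition demands disjointness, so a gap of exactly $p_s(k)$ gives abutting but disjoint intervals, which is fine, but a smaller odd gap like $p_s(k-1) < p_s(k)$ would give \emph{overlapping} intervals and is therefore \emph{not} allowed). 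This forces the argument to use only gaps $\geq p_s(k)$: namely $p_s(k)$ itself and $p_s(k+1) = p_s(k) + p_s(k-1)$. Since these two differ by $p_s(k-1)$, they have opposite parity exactly when $p_s(k-1)$ is odd, and the same parity when $p_s(k-1)$ is even; in the latter case $p_s(k)$ is odd (coprimality), so the gap $p_s(k)$ works. In every case one of the two admissible gaps is odd, both are $\leq p_s(k+1) < 2p_s(k) + p_s(k) = $ — more precisely $p_s(k)+p_s(k+1) \le p_s(k) + s\,p_s(k) + p_s(k-1)$, and for the bound $|x|\le 3p_s(k)$ we take the gap-$p_s(k)$ option whenever it is odd and the gap-$p_s(k+1)$ option otherwise, noting $p_s(k)+p_s(k+1) = 2p_s(k)+p_s(k-1) \le 3p_s(k)$. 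That inequality, together with the disjointness bookkeeping, is the heart of the proof.
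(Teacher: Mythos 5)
Your proposal follows essentially the same route as the paper: exhibit two occurrences of $\mathcal{C}_s(k)$ with odd left-endpoint gap, using $\mathcal{C}_s(k)\mathcal{C}_s(k)$ (gap $|\mathcal{C}_s(k)|$) when $|\mathcal{C}_s(k)|$ is odd and $\mathcal{C}_s(k)\mathcal{C}_s(k-1)\mathcal{C}_s(k)$ (gap $|\mathcal{C}_s(k)|+|\mathcal{C}_s(k-1)|$, total length $2|\mathcal{C}_s(k)|+|\mathcal{C}_s(k-1)|\le 3|\mathcal{C}_s(k)|$) when it is even; both words are factors of $\mathcal{C}_s(k+3)$. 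Your only real deviation is justifying that $|\mathcal{C}_s(k)|$ and $|\mathcal{C}_s(k-1)|$ cannot both be even via coprimality of consecutive lengths, where the paper instead computes the period of $|\mathcal{C}_s(n)|\bmod 2$; both work. One slip to correct: you repeatedly write $p_s(k+1)=p_s(k)+p_s(k-1)$, which holds only for $s=1$ (in general $p_s(k+1)=s\,p_s(k)+p_s(k-1)$), and taken literally a ``gap-$p_s(k+1)$'' pair of occurrences would span $(s+1)p_s(k)+p_s(k-1)>3p_s(k)$ letters for $s\ge 2$; the intended gap is $p_s(k)+p_s(k-1)$, realized by the factor $\mathcal{C}_s(k)\mathcal{C}_s(k-1)\mathcal{C}_s(k)$, and with that reading your length bound and parity argument are correct.
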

\begin{proof}
In the proof below, we simply write $\mathcal{C}_{s}(n)$ as $\mathcal{C}(n)$.
Recall that $\mathcal{C}(n)$ satisfies the concatenation rule 
\begin{equation*}
\mathcal{C}(n+1) = \mathcal{C}(n) ^{s} \mathcal{C}(n - 1). 
\end{equation*} 
Therefore, it is easy to see that $\mathcal{C}(k) \mathcal{C}(k)$ and 
$\mathcal{C}(k) \mathcal{C}(k-1) \mathcal{C}(k)$ are both factors of $\mathcal{C}(k+3)$.
If $| \mathcal{C}(k) |$ is odd, $x = \mathcal{C} (k) \mathcal{C}(k)$ satisfies the desired properties. 
Suppose $| \mathcal{C}(k) |$ is even. Note that the sequence 
\begin{equation*}
\left\{ | \mathcal{C}(n) | \mod 2 \right\}
\end{equation*}
repeats $1, 1, 1, 1, \cdots $ if $s$ 
is even, and $1, 0, 1, 1, 0, 1 \cdots$ if $s$ is odd. 
Since $| \mathcal{C}(k) |$ is even, $s$ has to be odd.
Therefore $| \mathcal{C}(k-1) |$ is odd, so 
$x = \mathcal{C}(k) \mathcal{C}(k-1) \mathcal{C}(k)$ satisfies the desired properties.
\end{proof}

\begin{lem}\label{twin}
For every $s \in \mathbb{N}$, there exists a constant $K_{s} > 0$ such that 
for any $x, y \in \mathcal{L}(\Omega^{(s)})$, $K_{s} |y| < |x|$ implies $y$ is odd-twin in $x$. 
Analogous results hold for even-twins. 
\end{lem}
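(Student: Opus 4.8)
The plan is to deduce Lemma~\ref{twin} from Lemma~\ref{twinlemma} together with the linear recurrence of $\left(\Omega^{(s)}, T\right)$. Let $K$ be the linear recurrence constant of the subshift, so that for all factors $u, v$ of the subshift, $K|u| < |v|$ implies $u$ is a factor of $v$. The strategy is: given $y \in \mathcal{L}(\Omega^{(s)})$, first locate a level-$k$ block $\mathcal{C}_s(k)$ whose length is comparable to $|y|$ (say within a bounded multiplicative factor depending only on $s$, using that $|\mathcal{C}_s(k+1)|/|\mathcal{C}_s(k)|$ is bounded above and below), then invoke Lemma~\ref{twinlemma} to get a short word $w$ with $|w| \le 3|\mathcal{C}_s(k)|$ in which $\mathcal{C}_s(k)$ appears as an odd-twin, and finally use linear recurrence to guarantee that $w$ (hence its odd-twin pair of occurrences of $\mathcal{C}_s(k)$, hence an odd-twin pair of occurrences of $y$) sits inside $x$ whenever $|x|$ is large enough relative to $|y|$.

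First I would record the elementary growth estimate: there are constants $0 < c_s < C_s$ with $c_s \le |\mathcal{C}_s(n+1)|/|\mathcal{C}_s(n)| \le C_s$ for all $n$ (indeed $|\mathcal{C}_s(n+1)| = s|\mathcal{C}_s(n)| + |\mathcal{C}_s(n-1)|$, so the ratios converge to the metallic mean $\frac{s+\sqrt{s^2+4}}{2}$ and are bounded throughout). Consequently, given any $y$ with $|y|$ large, I can choose $k$ minimal with $|\mathcal{C}_s(k)| \ge |y|$; then also $|\mathcal{C}_s(k)| \le C_s |y|$. Since $\mathcal{C}_s(k)$ is a factor of $u_s$ of length $\ge |y|$ and $y \in \mathcal{L}(\Omega^{(s)})$, I need $y$ to be a factor of $\mathcal{C}_s(k)$; this follows from linear recurrence provided $K|y| < |\mathcal{C}_s(k)|$, which I can force by instead choosing $k$ minimal with $|\mathcal{C}_s(k)| > K|y|$, giving $|\mathcal{C}_s(k)| \le C_s K |y|$.

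Next, apply Lemma~\ref{twinlemma} to this $k$: there is $w \in \mathcal{L}(\Omega^{(s)})$ with $|w| \le 3|\mathcal{C}_s(k)| \le 3 C_s K |y|$ in which $\mathcal{C}_s(k)$ is odd-twin, witnessed by disjoint intervals $J_1 = J_2 + k'$ with $k'$ odd and $w_{J_1} = w_{J_2} = \mathcal{C}_s(k)$. Since $y$ is a factor of $\mathcal{C}_s(k)$, say $y = (\mathcal{C}_s(k))_I$ for an interval $I$, the translated intervals $I + \min J_1 - 1$ and $I + \min J_2 - 1$ are disjoint, differ by the \emph{same} odd shift $k'$, and both carry the word $y$; hence $y$ is odd-twin in $w$. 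Finally, by linear recurrence, $K|w| < |x|$ implies $w$ is a factor of $x$, and odd-twin-ness is inherited by any word containing $w$. Setting $K_s := 3 C_s K^2$ we get: $K_s|y| < |x|$ forces $K|w| < |x|$, so $w$ (and therefore $y$ as odd-twin) occurs in $x$. The case of small $|y|$ (finitely many words $y$) can be absorbed by enlarging $K_s$ if necessary, since for each such $y$ one exhibits a single witnessing word and takes $K_s$ past the relevant length ratio; the even-twin statement is identical, using the even-twin clause of Lemma~\ref{twinlemma}.

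The main obstacle is bookkeeping rather than conceptual: one must be careful that the \emph{parity} of the shift is preserved when passing from the odd-twin occurrence of $\mathcal{C}_s(k)$ to the odd-twin occurrence of $y$ inside it (it is, because $y$ sits at the same relative position in both copies of $\mathcal{C}_s(k)$, so the shift between the two copies of $y$ equals the shift $k'$ between the two copies of $\mathcal{C}_s(k)$), and that all the constants $c_s, C_s, K$ genuinely depend only on $s$. I would also double-check that Lemma~\ref{twinlemma} is being applied at a level $k$ for which $\mathcal{C}_s(k)$ is long enough — which is automatic here since we pushed $k$ up to overcome the factor $K$ in the first application of linear recurrence — so that no edge cases with tiny $k$ intrude.
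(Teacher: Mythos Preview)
Your proof is correct and follows essentially the same approach as the paper's: use linear recurrence to find a level $k$ with $K|y| < |\mathcal{C}_s(k)| \le C_sK|y|$ so that $y$ is a factor of $\mathcal{C}_s(k)$, apply Lemma~\ref{twinlemma} to produce a word $w$ of length at most $3C_sK|y|$ in which $\mathcal{C}_s(k)$ (hence $y$, with the same odd shift) is odd-twin, and then use linear recurrence once more to place $w$ inside $x$, arriving at $K_s = 3C_sK^2$ --- the paper does exactly this with the explicit bound $C_s = s+1$. The only slip is your appeal to an ``even-twin clause of Lemma~\ref{twinlemma}'', which the paper does not state; the paper simply declares the even-twin case of Lemma~\ref{twin} immediate, and indeed the analogous even-twin version of Lemma~\ref{twinlemma} follows by the same parity bookkeeping.
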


\begin{proof}
Let us show the statement for odd-twin. The latter statement is immediate. 
Let $K>0$ be a number such that for any $x, y \in \mathcal{L}(\Omega^{(s)})$, 
$y$ is a factor of $x$ whenever $K |y| < | x |$.
Let $y \in \mathcal{L}(\Omega^{(s)})$.
Take $k > 0$ such that
\begin{equation*}
 | \mathcal{C}_{s}(k-1) | \leq  K |y|  < | \mathcal{C}_{s}(k) |.
 \end{equation*}
Then $y$ is a factor of $\mathcal{C}_{s}(k)$, and 
since $ | \mathcal{C}_{s}(k) | < ( s+1 ) | \mathcal{C}_{s}(k-1) | $, we have 
$|\mathcal{C}_{s}(k)| < (s+1) K |y|$. Therefore, by Lemma \ref{twinlemma}, 
there exists $v \in \mathcal{L}(\Omega^{(s)})$ such that 
$|v| < 3 (s+1) K |y|$ and $\mathcal{C}_{s}(k)$ is odd-twin in $v$. 
Since $y$ is a factor of $\mathcal{C}_{s}(k)$, $y$ is odd-twin in $v$.
Therefore,
\begin{equation*}
K_{s} := K \cdot 3(s+1) K
= 3(s+1) K^{2}
\end{equation*}
satisfies the desired properties.
\end{proof}

\subsection{The off-diagonal model}
Let $a, b > 0$ be real numbers, and let $s$ be a positive integer. 
Let $\omega \in \Omega^{(s)}_{a, b}$. We define a Jacobi matrix $H_{\omega}$ acting on $l^{2}(\mathbb{Z})$ by 
\begin{equation*}
\left( H_{\omega} \psi \right) (n) = \omega(n+1) \psi(n+1) + \omega(n) \psi(n-1),
\end{equation*}  
and set 
\begin{equation*}
\lambda = \left| \frac{  a^{2} - b^{2}  }{ ab } \right|.
\end{equation*}
We call this $\lambda$ the \emph{coupling constant}. 
We only consider the case that $a > b$. The argument is completely analogous in the case $a < b$. 
By appropriate scaling, we can always assume $b = 1$. 
We assume this scaling all throughout this section. 
Note that this coincides with the definition (\ref{CC}). See also Remark \ref{reason} below. 
We call this family of self-adjoint operators $\left\{ H_{\omega} \right\}$ 
the \emph{off-diagonal model}. By a well known argument (minimality of the subshift and strong operator convergence), 
one can see that the spectrum of $H_{\omega}$ is independent of the specific choice of $\omega$  
and depends only on $\lambda \text{ and } s$.

\begin{dfn}
We define the \emph{trace map} $T_{s}$ by 
\begin{equation*}
T_{s} = U^{s} \circ P,
\end{equation*}
 where 
\begin{equation*}
U
\begin{pmatrix}
x \\
y \\
z
\end{pmatrix}
 = 
\begin{pmatrix}
2xz - y \\
x \\
z 
\end{pmatrix}
\text{ \ and \ }
P 
\begin{pmatrix}
x \\
y \\
z
\end{pmatrix}
= 
\begin{pmatrix}
x \\
z \\
y
\end{pmatrix}.
\end{equation*}
\end{dfn}

Let $\ell_{\lambda}$ be the line given by
\begin{equation*}
\ell_{\lambda} = \left\{ \left( \frac{E^2 - a^2 - 1}{2a}, \frac{E}{2a}, \frac{E}{2}  \right) : E \in \mathbb{R} \right\}, 
\end{equation*}
and call this the \emph{line of initial condition}. 
We define the map $J_{\lambda}(\cdot)$ by 
\begin{equation}\label{identification}
J_{\lambda} : E \mapsto \ell_{\lambda}(E).
\end{equation}
The function 
\begin{equation*}
G(x, y, z) = x^{2} + y^{2} + z^{2} - 2xyz - 1
\end{equation*}
is invariant under the action of $T_{s}$ and hence preserves the family of surfaces
\begin{equation*}
S_{V} = \left\{ (x, y, z) \in \mathbb{R}^{3} : x^{2} + y^{2} + z^{2} - 2xyz - 1 =  
\frac{ V^{2} }{4} \right\}.
\end{equation*}
It is easy to see that $\ell_{\lambda} \subset S_{\lambda}$. 

The following can be proven by repeating the argument of \cite{WM}. 

\begin{thm}\label{bounded}
We have 
\begin{equation*}
\sigma(H_{\omega}) = \left\{ E \in \mathbb{R} : \text{ the forward semi-orbit of } 
J_{\lambda}(E) \text{ is bounded\,} \right\}.
\end{equation*}
\end{thm}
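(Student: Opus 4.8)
**Proof proposal for Theorem 2.4 (the characterization $\sigma(H_\omega) = \{E : \text{forward semi-orbit of } J_\lambda(E) \text{ bounded}\}$).**

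The plan is to follow the transfer-matrix / trace-map machinery that is by now standard for these substitution Jacobi matrices, adapting the details of \cite{WM} to the metallic-mean substitution $\mathcal{P}_s$ with its concatenation rule $\mathcal{C}_s(n+1) = \mathcal{C}_s(n)^s \mathcal{C}_s(n-1)$. First I would write down the one-step transfer matrix for the difference equation $\omega(n+1)\psi(n+1) + \omega(n)\psi(n-1) = E\psi(n)$, namely $A_\omega(n,E) = \begin{pmatrix} E/\omega(n+1) & -\omega(n)/\omega(n+1) \\ 1 & 0 \end{pmatrix}$, and form the products $M_k(E) = A_\omega(|\mathcal{C}_s(k)|,E)\cdots A_\omega(1,E)$ over blocks of length $|\mathcal{C}_s(k)|$. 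Using the concatenation rule, these block matrices satisfy a recursion $M_{k+1} = M_{k-1} M_k^{\,s}$ (up to the usual conjugation/boundary bookkeeping coming from the fact that $\omega$ is a two-sided element of the hull rather than literally $u_s$; this is handled exactly as in \cite{WM} by passing to a suitable point of the subshift and using minimality, which is legitimate since the spectrum is $\omega$-independent). Taking $x_k = \tfrac12\operatorname{tr} M_k(E)$, the Cayley–Hamilton identity together with $\det M_k = \omega(1)/\omega(|\mathcal{C}_s(k)|+1) = \pm 1$ (after the $b=1$ scaling, and up to the factor that is absorbed into $G$) yields precisely the trace recursion encoded by $T_s = U^s \circ P$, with the invariant $G(x,y,z)$ being the Fricke–Vogt / Kirchhoff invariant; matching the initial triple $(x_1,x_0,x_{-1})$ to the line $\ell_\lambda(E)$ is the content of the identification $J_\lambda$, and the inclusion $\ell_\lambda \subset S_\lambda$ is the statement that $G$ evaluated on the initial condition equals $\lambda^2/4$.

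With the trace-map setup in place, the proof splits into the two standard inclusions. For $E \notin \sigma(H_\omega) \Rightarrow$ unboundedness: if $E$ is in a gap, the transfer matrix cocycle is uniformly hyperbolic (the resolvent set coincides with the set of uniform hyperbolicity for whole-line Jacobi matrices with minimal potential — Johnson's theorem / the Sarnak–Kotani type argument), and uniform hyperbolicity forces $\|M_k(E)\| \to \infty$ exponentially, hence $|x_k| \to \infty$, i.e. the forward semi-orbit of $J_\lambda(E)$ escapes to infinity. For $E$ with unbounded semi-orbit $\Rightarrow E \notin \sigma$: here one uses the quantitative step that once some $|x_{k_0}|$ is large enough (larger than a threshold depending only on the value $\lambda^2/4$ of the invariant $G$), the escape is "locked in" — the classical estimate $|x_{k+1}| \geq |x_k|^s$-type lower bound propagates along the orbit and simultaneously gives exponential growth of $\|M_k(E)\|$, which via the Gordon-type argument / the combinatorics of $\mathcal{C}_s(k)$ (the near-repetitions $\mathcal{C}_s(k)\mathcal{C}_s(k)$ and $\mathcal{C}_s(k)\mathcal{C}_s(k-1)\mathcal{C}_s(k)$ as factors, cf. Lemma \ref{twinlemma}) shows no $\ell^2$ or even polynomially bounded solution can exist, so $E$ is not in the spectrum. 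The two inclusions together give the claimed equality.

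The main obstacle I expect is not conceptual but bookkeeping: carrying the correct determinants and the correct boundary identifications through the $s$-fold concatenation $\mathcal{C}_s(n)^s$. The original Fibonacci case ($s=1$) in \cite{WM} has the clean recursion $M_{k+1}=M_{k-1}M_k$; for general $s$ one must verify that $M_k^{\,s}$ interacts with the invariant $G$ correctly, that the map so obtained is genuinely $U^s \circ P$ and not some conjugate, and that the off-diagonal weights $\omega(n)$ (as opposed to on-diagonal potentials) enter only through $\lambda = |a^2-b^2|/(ab)$ as asserted in Remark \ref{reason}. Once that is checked, every remaining step is a verbatim transcription of the arguments in \cite{WM}, which is exactly why the paper states that the theorem "can be proven by repeating the argument of \cite{WM}."
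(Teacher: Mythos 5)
The paper gives no argument for this theorem beyond the one-line remark that it ``can be proven by repeating the argument of \cite{WM}'', so the comparison is really with the S\"ut\H{o}-type argument of \cite{WM} that the paper delegates to. Your setup --- the transfer matrices $A_\omega(n,E)$, the block products $M_k$ with $M_{k+1}=M_{k-1}M_k^{\,s}$ from the concatenation rule, the normalization needed because $\det A_\omega(n,E)=\omega(n)/\omega(n+1)$ is not $\pm1$, the Fricke--Vogt invariant, and the identity $G(\ell_\lambda(E))=\lambda^2/4$ --- is exactly the right skeleton, and you correctly flag the bookkeeping points that actually need checking for general $s$ and for off-diagonal weights.

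However, both of your inclusions, as stated, contain genuine gaps, and neither is how \cite{WM}/S\"ut\H{o} actually argue. (i) In the direction ``$E\notin\sigma\Rightarrow$ escape'' you pass from uniform hyperbolicity to $\|M_k\|\to\infty$ to $|x_k|\to\infty$; the last step is not automatic, since exponential growth of norms does not control traces (a parabolic matrix has large norm and trace $2$). This can be repaired using recurrence of the base point along the scales $|\mathcal{C}_s(k)|$ and continuity of the hyperbolic splitting, but it needs an argument. (ii) More seriously, in the direction ``escape $\Rightarrow E\notin\sigma$'' you conclude from the absence of polynomially bounded solutions that $E\notin\sigma$. That is the converse of Sch'nol's theorem and is false pointwise: generalized eigenvalues form a set of full spectral measure whose \emph{closure} is $\sigma$, so an individual $E\in\sigma$ need not carry a polynomially bounded solution. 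A Gordon-type three-block argument excludes eigenvalues; it does not place $E$ in the resolvent set. The mechanism in \cite{WM} (going back to \cite{S87}) avoids both issues: set $\sigma_k=\{E:|x_k(E)|\le1\}$, prove the escape lemma (once $|x_k|,|x_{k+1}|>1$ the orbit diverges, so the bounded-orbit set equals $\bigcap_k(\sigma_k\cup\sigma_{k+1})$, a nested intersection of compacts), identify $\sigma_k$ with the spectrum of the period-$|\mathcal{C}_s(k)|$ approximant, and then obtain both inclusions by strong approximation: $E\in\sigma$ forces $E\in\sigma_k\cup\sigma_{k+1}$ for all $k$, and conversely $E\in\sigma_k$ for infinitely many $k$ yields a Weyl sequence for $H_\omega$ built from periodic eigenfunctions on windows where the approximant agrees with $\omega$. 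You should replace your two inclusions by this scheme, or, for (ii), upgrade ``escape'' to uniform hyperbolicity of the cocycle and invoke Johnson's theorem in the correct direction.
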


Notice that it is clear from this theorem that the spectrum of $H_{\omega}$ depends only on $\lambda$ and $s$. 
We denote it by $\Sigma_\lambda$ below. 

\begin{rem}\label{reason}
Let us define $\ell'_\lambda$ by  
\begin{equation*}
\ell'_{\lambda} = \left\{ \left( \frac{E^2 - \lambda E - 2}{2}, \frac{E - \lambda}{2}, \frac{E}{2}  \right) : E \in \mathbb{R} \right\}, 
\end{equation*}
and the map $J'_{\lambda}(\cdot)$ by 
\begin{equation*}
J'_{\lambda} : E \mapsto \ell'_{\lambda}(E).
\end{equation*}
It is easy to see that $\ell'_0 = \ell_0$ and $\ell'_{\lambda} \subset S_{\lambda}$. Then, the exact same statement in Theorem \ref{bounded} 
holds with $H_{\lambda, \beta}$ and $J'$ instead of $H_\omega$ and $J$, respectively. See, for example, \cite{DG11}. 
This is why we defined the coupling constant of off-diagonal model by (\ref{CC}). 
\end{rem}

By Theorem \ref{bounded}, we immediately get the following:
\begin{cor}
The spectrum $\Sigma_\lambda$ contains $0$. 
\end{cor}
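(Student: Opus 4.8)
The plan is to use Theorem \ref{bounded}, which characterizes $\Sigma_\lambda$ as the set of energies $E$ for which the forward semi-orbit of $J_\lambda(E)$ under $T_s$ stays bounded. So it suffices to exhibit one energy $E_0$ whose orbit is bounded, and the natural candidate is $E_0 = 0$. First I would compute $J_\lambda(0)$ directly from the definition of $\ell_\lambda$: plugging $E = 0$ gives the point $\left( \frac{-a^2 - 1}{2a}, 0, 0 \right)$, i.e. a point of the form $(x_0, 0, 0)$ with $x_0 = -\frac{a^2+1}{2a}$. Note $|x_0| \ge 1$ with equality only when $a = 1$, i.e. only in the free case $\lambda = 0$; but this turns out not to matter.

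Next I would track the orbit of $(x_0, 0, 0)$ under the trace map $T_s = U^s \circ P$. The key observation is that the plane $\{y = 0\}$ (or more precisely the set of points with a zero coordinate in the relevant slot) is essentially invariant under the dynamics in a way that collapses it to a one-dimensional picture. Applying $P$ to $(x_0, 0, 0)$ gives $(x_0, 0, 0)$ again (swapping the second and third coordinates does nothing). Applying $U$ to $(x, 0, 0)$ gives $(2x \cdot 0 - 0, x, 0) = (0, x, 0)$; applying $U$ again gives $(0, 0, x)$; applying $U$ a third time gives $(0, 0, x)$ — wait, one must be careful and simply compute the finite orbit segment $U, U^2, U^3, \dots$ of $(x_0,0,0)$ and then see what $P$ does. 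The point is that iterating $U$ on a vector with at least two zero coordinates just permutes/relocates the single nonzero entry $x_0$ and never amplifies it, and $P$ only permutes coordinates, so every point in the forward orbit has all coordinates in $\{0, x_0\}$ — hence the orbit is contained in the finite set $\{0, x_0\}^3$ and is trivially bounded.

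Therefore $0 \in \Sigma_\lambda$ by Theorem \ref{bounded}. I would then remark that this also follows more softly from the fact that $\hat H_\omega$, being a bounded self-adjoint Jacobi-type operator with off-diagonal entries that take both positive and negative — or rather, here the relevant point is that $0$ always lies in the spectrum of a bipartite (chiral) Jacobi matrix on $\mathbb{Z}$ by a symmetry/parity argument — but the trace-map computation is cleaner and self-contained given what has been set up. The main obstacle, such as it is, is purely bookkeeping: one must carry out the iteration of $T_s = U^s \circ P$ on $(x_0, 0, 0)$ carefully enough to confirm that no coordinate ever grows, and in particular that the $2xz$ term in $U$ always has a vanishing factor along this orbit. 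Since at each stage at most one coordinate is nonzero and $U$ maps $(x,y,0)$-type vectors in a way that keeps producing vectors with a zero in the $z$-slot (or can be checked to), the $2xz$ term is always $0$, and boundedness is immediate. I do not anticipate any genuine difficulty here; the corollary is essentially a one-line consequence of Theorem \ref{bounded} once the base point $J_\lambda(0)$ is identified.
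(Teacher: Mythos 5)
Your proposal is correct and takes essentially the same route as the paper: the paper's proof likewise computes $J_{\lambda}(0) = \left( -\frac{a^2+1}{2a}, 0, 0 \right)$ and observes that this point is periodic under $T_s$, which is exactly your observation that the forward orbit stays in the finite set of points with at most one nonzero coordinate (so the $2xz$ term always vanishes), whence boundedness and Theorem \ref{bounded} give $0 \in \Sigma_\lambda$. The only nit is that the orbit lives in $\{0, x_0, -x_0\}^3$ rather than $\{0, x_0\}^3$ (e.g.\ $U^2(x_0,0,0) = (-x_0,0,0)$), but since the nonzero entry is only permuted and negated, never amplified, this does not affect the conclusion.
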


\begin{proof}
Note that 
\begin{equation}
J_{\lambda}(0) = \left( -\frac{a^2 + 1}{2a}, 0, 0 \right).
\end{equation}
It is easy to see that this point is periodic under the action of $T_s$. 
\end{proof}

In what follows we are going to use some notations and results from the theory of hyperbolic dynamical systems, see \cite{KT} for some 
background on this subject. 

Let us denote by $\Lambda_{\lambda}$ the set of points whose orbits are bounded under $T_{s}$. 
The following theorem was first proven 
in \cite{S87} for the Fibonacci Hamiltonian. 

\begin{thm}[\cite{Can09}, see also Theorem 4.1 from \cite{M}]
The set $\Lambda_{\lambda}$ 
 is a compact locally maximal $T_{s}$-invariant transitive hyperbolic subset of $S_{\lambda}$, and 
 the periodic points of $T_{s}$ form a dense subset of $\Lambda_{\lambda}$.
 \end{thm}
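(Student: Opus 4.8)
First I would organise a proof around the fact that for $\lambda>0$ one has $G\equiv\lambda^{2}/4>0$, so $S_{\lambda}$ is a smooth noncompact surface, and $T_{s}=U^{s}\circ P$ is a polynomial automorphism of $\mathbb{R}^{3}$ preserving every $S_{V}$ (one has $U^{-1}(x,y,z)=(y,\,2yz-x,\,z)$ and $P^{-1}=P$). The plan is to realise the set of points of $S_{\lambda}$ with bounded $T_{s}$-orbit as the maximal invariant set of an explicit compact trapping region, to equip it with a $DT_{s}$-invariant hyperbolic splitting by a cone argument, and then to read off transitivity and density of periodic points from a Markov coding together with the standard theory of locally maximal hyperbolic sets. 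The Fibonacci case of this is \cite{S87}, and the general composition is treated in \cite{M}; the high-technology route via automorphisms of affine surfaces is \cite{Can09}.

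The first step is the escape-at-infinity analysis, carried out for the Fibonacci trace map in \cite{S87} and for the present composition in \cite{M}: one exhibits a compact region $R\subset S_{\lambda}$ such that any orbit which leaves $R$ has a coordinate tending to $\pm\infty$ under further iteration, in forward or in backward time. It then follows that $\Lambda_{\lambda}=\bigcap_{n\in\mathbb{Z}}T_{s}^{\,n}(R)$ is precisely the maximal $T_{s}$-invariant subset of $R$, which at once gives that $\Lambda_{\lambda}$ is compact, $T_{s}$-invariant, and locally maximal.

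Next I would build invariant cone families over $R$. Using the explicit differentials $DU=\left(\begin{smallmatrix}2z&-1&2x\\1&0&0\\0&0&1\end{smallmatrix}\right)$ and $DP=\left(\begin{smallmatrix}1&0&0\\0&0&1\\0&1&0\end{smallmatrix}\right)$, one checks that there is an ``unstable'' cone family on $R$ mapped strictly into itself by $DU$ with a uniform expansion factor $>1$ wherever $|x|$ is large — which is where bounded orbits actually spend their time — and that this survives composition, so that $DT_{s}=DU^{s}\circ DP$ preserves the corresponding cone family with uniform expansion once the orbit is tracked through the $s$ intermediate applications of $U$; the mirror estimate for $DT_{s}^{-1}$ supplies the stable cone family. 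Standard cone-field criteria then yield a continuous $DT_{s}$-invariant splitting $T_{\Lambda_{\lambda}}S_{\lambda}=E^{s}\oplus E^{u}$ with uniform contraction and expansion, i.e.\ $\Lambda_{\lambda}$ is a hyperbolic set. For the last two assertions I would construct a Markov partition of $(\Lambda_{\lambda},T_{s})$, as for the Fibonacci trace map in \cite{DG11}, verify that the resulting subshift of finite type is irreducible (for large $\lambda$ one recovers Casdagli's description as a full shift, which makes this transparent), and then transfer transitivity and density of periodic points from the subshift to $\Lambda_{\lambda}$ through the Markov semi-conjugacy.

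I expect the genuine difficulty to be uniformity of the hyperbolic estimates as $\lambda\downarrow 0$. When $\lambda$ is small, $S_{\lambda}$ is close to the singular Cayley cubic $S_{0}$, near whose four conical singular points the cone and expansion estimates degenerate, so a naive argument only yields hyperbolicity for $\lambda$ bounded away from $0$. Overcoming this calls for a local analysis near the singularities — a blow-up or an adapted change of coordinates, in the spirit of the treatment of the weakly coupled Fibonacci Hamiltonian in \cite{DG11} — showing that $T_{s}$-bounded orbits keep a definite distance from the singular points, so that the hyperbolicity constants stay bounded below uniformly over all $\lambda>0$. A secondary, more routine point is that cone invariance for $U$ on $R$ does not by itself give cone invariance for $U^{s}$ on $R$; one must propagate the estimates through each application of $U$ separately, which is manageable provided $R$ is chosen with the factorisation $T_{s}=U^{s}\circ P$ in mind.
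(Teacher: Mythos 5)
The paper does not prove this theorem at all: it is imported wholesale from \cite{Can09} (via Theorem 4.1 of \cite{M}), and Cantat's argument is complex-analytic --- one compactifies and complexifies the cubic surface, studies the dynamical Green functions and the laminar stable/unstable currents of the automorphism, and deduces uniform hyperbolicity of the set of bounded orbits for every $\lambda\neq 0$ from the geometry of those currents. Your sketch instead follows the real-variable route of S\"ut\H{o}, Casdagli and Damanik--Gorodetski (trapping region, invariant cone fields, Markov coding). That route is legitimate but, as far as is known, it only covers the two extremes of the coupling: Casdagli's cone construction \cite{Cas} works for large $\lambda$, where the bounded orbits live in the region where the coordinates are large and $U$ is strongly expanding, and the Damanik--Gorodetski analysis \cite{DG0902}, \cite{DG11} works for small $\lambda$. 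No cone-field argument covering all $\lambda>0$ is available, which is precisely why the paper cites \cite{Can09} rather than the earlier real-variable papers. Read as a proof of the full statement, your proposal therefore has a gap in the intermediate coupling regime that is not merely technical.

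Two specific steps would also fail as written. First, ``uniform expansion wherever $|x|$ is large --- which is where bounded orbits actually spend their time'' is true only for large $\lambda$; for small $\lambda$ the set $\Lambda_{\lambda}$ is a small perturbation of $S_{0}\cap\{|x|,|y|,|z|\leq 1\}$, so the bounded orbits spend all their time where every coordinate is close to the unit cube. Second, your proposed cure for the degeneration near the singularities of the Cayley cubic --- showing that bounded orbits keep a definite distance from the singular points --- is false: the four singular points of $S_{0}$ are periodic under $T_{s}$, the sets $\Lambda_{\lambda}$ contain periodic points converging to them as $\lambda\downarrow 0$, and the weakly coupled analysis in \cite{DG0902} consists precisely in controlling orbits that do pass arbitrarily close to the singularities, via normal hyperbolicity of those singular periodic orbits in $\mathbb{R}^{3}$, not in excluding them. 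Finally, hyperbolicity plus local maximality alone yields density of periodic points only in the nonwandering set; transitivity (and hence the last clause of the theorem) needs an additional input, e.g.\ the explicit coding at one parameter value transported along the connected parameter interval by structural stability.
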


We also have the following: 

\begin{thm}[Corollary 2.5 of \cite{DPW}]\label{thm2.3}
The forward semi-orbit of a point $p \in S_{\lambda}$ is bounded if and only if $p$ 
lies in the stable lamination of $\Lambda_{\lambda}$.
\end{thm}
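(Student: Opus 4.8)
The plan is to prove both implications from the hyperbolic structure of $\Lambda_{\lambda}$ together with its local maximality, established in the preceding theorem. First I would fix a neighborhood $U \supset \Lambda_{\lambda}$ in $S_{\lambda}$ small enough to serve simultaneously as a local-maximality neighborhood, so that $\bigcap_{n \in \mathbb{Z}} T_{s}^{n}(U) = \Lambda_{\lambda}$, and small enough that the local stable and unstable manifolds of points of $\Lambda_{\lambda}$ have uniform size and a local product structure holds on $U$; such a $U$ exists by the preceding theorem and the standard theory of locally maximal hyperbolic sets (\cite{KT}). I would also record that $T_{s}$ is a diffeomorphism of $S_{\lambda}$ (indeed a polynomial automorphism), so it is invertible.

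For the implication ($\Leftarrow$): if $p$ lies in the stable lamination, then $p \in W^{s}(q)$ for some $q \in \Lambda_{\lambda}$, so $\mathrm{dist}(T_{s}^{n}p, T_{s}^{n}q) \to 0$ as $n \to \infty$. Since $\Lambda_{\lambda}$ is compact, the orbit $\{T_{s}^{n}q : n \ge 0\}$ is bounded; the remaining finitely many points $\{T_{s}^{n}p : 0 \le n \le N\}$ form a finite set and the tail $\{T_{s}^{n}p : n > N\}$ lies within bounded distance of a bounded set, so the forward semi-orbit of $p$ is bounded. This direction is essentially soft.

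For the implication ($\Rightarrow$): assume the forward semi-orbit of $p$ is bounded. Then $\omega(p)$ is nonempty, compact, and $T_{s}$-invariant, i.e. $T_{s}(\omega(p)) = \omega(p)$, hence also $T_{s}^{-1}(\omega(p)) = \omega(p)$. Consequently, every $q \in \omega(p)$ has its full bi-infinite orbit $\{T_{s}^{n}q : n \in \mathbb{Z}\}$ contained in the compact set $\omega(p)$, so that orbit is bounded and $q \in \Lambda_{\lambda}$ by the definition of $\Lambda_{\lambda}$; thus $\omega(p) \subseteq \Lambda_{\lambda} \subseteq U$. Since $\omega(p)$ is compact, $U$ is open, and $\mathrm{dist}(T_{s}^{n}p, \omega(p)) \to 0$, there is an $N$ with $T_{s}^{n}p \in U$ for all $n \ge N$. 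Now $T_{s}^{N}p$ is a point whose entire forward orbit remains in $U$; by local maximality and the local product structure on $U$, such a point must lie on a local stable manifold $W^{s}_{\mathrm{loc}}(x)$ of some $x \in \Lambda_{\lambda}$ — this is the standard stable-set capture lemma for locally maximal hyperbolic sets, which is the form in which Corollary 2.5 of \cite{DPW} is applied. Applying $T_{s}^{-N}$ then gives $p \in W^{s}(T_{s}^{-N}x) \subseteq W^{s}(\Lambda_{\lambda})$, the stable lamination.

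The main obstacle is the final step of ($\Rightarrow$): passing from ``the forward orbit of $T_{s}^{N}p$ stays in $U$'' to ``$T_{s}^{N}p$ lies on a local stable manifold.'' This is exactly where the genuinely dynamical input enters — one needs $U$ small enough for the local product structure (equivalently, a shadowing/graph-transform argument), and this is the content one borrows from \cite{DPW}. Everything else reduces to the defining property of $\Lambda_{\lambda}$ and elementary facts about $\omega$-limit sets; a secondary point requiring care is only the invertibility of $T_{s}$ on $S_{\lambda}$, which is what licenses pulling the conclusion back from $T_{s}^{N}p$ to $p$.
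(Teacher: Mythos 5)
Your argument is correct. The paper gives no proof of this statement --- it is imported verbatim as Corollary 2.5 of \cite{DPW} --- and your reconstruction (the soft direction via contraction along stable manifolds, plus the $\omega$-limit-set argument showing $\omega(p)\subseteq\Lambda_{\lambda}$ combined with the stable-set capture property of locally maximal hyperbolic sets) is exactly the standard argument underlying that corollary, with the capture lemma correctly identified as the one substantive dynamical input being borrowed.
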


The following theorem was proven in \cite{Fibonacci} for the Fibonacci Hamiltonian case, 
and recently it was extended to tridiagonal Fibonacci Hamiltonians in \cite{FTY}. It follows by repeating the argument of \cite{Fibonacci}. 

\begin{thm}\label{transverse}
For all $\lambda > 0$, the intersections of the curve of initial condition 
$\ell_{\lambda}$ with the stable lamination is transverse.
\end{thm}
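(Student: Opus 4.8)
The plan is to follow the strategy of \cite{Fibonacci} verbatim, adapting it to the metallic mean trace map $T_s = U^s \circ P$ in place of the golden mean trace map. The core mechanism is this: the line of initial condition $\ell_\lambda$ is a real-analytic curve lying on the invariant surface $S_\lambda$, and the stable lamination of the hyperbolic set $\Lambda_\lambda$ is a real-analytic lamination of a neighborhood of $\Lambda_\lambda$ in $S_\lambda$ (real-analyticity of the invariant manifolds follows from the real-analyticity of $T_s$ together with standard invariant manifold theory for the locally maximal hyperbolic set $\Lambda_\lambda$). Suppose for contradiction that at some parameter $\lambda > 0$ there is a point $p = \ell_\lambda(E_0) \in \Lambda_\lambda$ where $\ell_\lambda$ is tangent to the stable leaf $W^s(p)$. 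Since both objects are real-analytic and one-dimensional (inside the two-dimensional $S_\lambda$), a tangency that is not transverse forces, by analyticity, either infinite-order contact along that leaf or the curve to coincide locally with the leaf on a set with an accumulation point; in either case one can extract strong rigidity.

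The key steps, in order, are as follows. First, I would record the relevant structure: $\Lambda_\lambda \subset S_\lambda$ is a transitive hyperbolic set with dense periodic points (cited above), $\ell_\lambda \subset S_\lambda$, and by Theorem \ref{thm2.3} the spectrum $\Sigma_\lambda$ equals $\ell_\lambda^{-1}$ of the stable lamination; in particular every $E \in \Sigma_\lambda$ gives a point of $\ell_\lambda$ on a stable leaf. Second, I would set up the contradiction hypothesis of a nontransversal intersection at $p = J_\lambda(E_0)$ and push the point forward: the forward orbit $T_s^n(p)$ accumulates on $\Lambda_\lambda$, and the image curves $T_s^n(\ell_\lambda)$ accumulate, in the $C^1$ topology on compact pieces, onto pieces of the \emph{unstable} lamination (this is the usual $\lambda$-lemma / inclination lemma argument, using that $p$ lies in the stable set). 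Third — and this is where the metallic-mean-specific input enters — I would use the explicit form of $\ell_\lambda$ and of $T_s$ to show that $\ell_\lambda$ is nowhere tangent to the \emph{unstable} direction, or more precisely that $\ell_\lambda$ is never contained in an unstable leaf; combined with the accumulation statement, a tangency at $p$ would propagate to a tangency of a limiting curve with a stable leaf that is simultaneously forced to be (almost) an unstable leaf, contradicting hyperbolic splitting transversality $E^s \oplus E^u = TS_\lambda$. Finally, one handles separately the (a priori) case where $p$ is in the stable lamination but not in $\Lambda_\lambda$ itself by flowing forward until the orbit is close to $\Lambda_\lambda$ and using continuity of the lamination, and the case of tangency orders via the analyticity rigidity noted above.

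The main obstacle I anticipate is the third step: controlling the position of $\ell_\lambda$ relative to the unstable direction of $\Lambda_\lambda$ for the general metallic mean map $T_s$, uniformly in $\lambda$. For $s=1$ this is exactly the computation in \cite{Fibonacci}, and the claim in the excerpt is that "it follows by repeating the argument," so presumably the only real work is checking that the finitely many explicit computations (the behavior of $\ell_\lambda$ under a few iterates of $T_s = U^s \circ P$, the periodic points used as anchors, and the transversality estimate at those periodic points) go through with $s$ as a parameter; the permutation $P$ and the $s$-fold iterate of $U$ change the formulas but not their structure. A secondary subtlety is ensuring the argument is uniform in $\lambda > 0$ all the way down to $\lambda \to 0^+$, where $\Lambda_\lambda$ degenerates; here one uses that $\ell_0 = \ell_0'$ and the known behavior of the trace map near the $\lambda = 0$ curve (as in \cite{DG11}) to get the transversality with a bound that does not blow up. Modulo these checks, the proof is a faithful adaptation, so I would present it by stating the adaptation explicitly and invoking \cite{Fibonacci} and \cite{FTY} for the steps that are formally identical.
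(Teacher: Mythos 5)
The paper itself contains no proof of Theorem \ref{transverse}: it is justified only by the sentence preceding it, which attributes the result to \cite{Fibonacci} (the case $s=1$) and \cite{FTY} and asserts that the argument repeats. So the real question is whether your sketch is a viable reconstruction of that argument, and it is not: the mechanism you substitute for the citation has two concrete breaks. First, the ``analyticity rigidity'' in your opening paragraph is stated backwards. Two real-analytic curves in a surface can meet with any \emph{finite} order of contact without coinciding (consider $y=0$ and $y=x^{2}$); what analyticity actually gives is the dichotomy ``finite-order contact or local coincidence.'' The coincidence branch can indeed be killed (it would place an interval of energies inside $\Sigma_{\lambda}$, contradicting the zero-measure results of \cite{B}, \cite{Damanik}), but the finite-order branch is the genuine case, and your argument never disposes of it. Second, the contradiction you aim for in your third step does not materialize. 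The inclination lemma in the $C^{1}$ form you invoke requires the disk to be \emph{transverse} to the stable manifold at the base point; at a hypothetical tangency the images $T_{s}^{n}(\ell_{\lambda})$ develop folds, and the tangent direction of $T_{s}^{n}(\ell_{\lambda})$ \emph{at} the forward orbit of the tangency point remains the stable direction, since a diffeomorphism preserves tangency. Nothing forces a stable leaf to align with an unstable leaf, so the transversality of the splitting $E^{s}\oplus E^{u}$ (which in any case holds only at points of $\Lambda_{\lambda}$, not on all of $S_{\lambda}$) is never contradicted; likewise, knowing that $\ell_{\lambda}$ is nowhere tangent to the unstable direction does not interact with a tangency to a stable leaf in the way you claim.

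The argument of \cite{Fibonacci} that the paper is invoking is not a soft inclination-lemma argument but rests on spectral and algebraic input from the periodic approximants: the traces $x_{n}(E)$ of the transfer matrices over the approximants are polynomials in $E$ whose degree equals the number of bands of the approximating spectra $\sigma_{n}=\{E:|x_{n}(E)|\le 1\}$, which forces $x_{n}$ to map each band diffeomorphically onto $[-1,1]$ with all critical points in the gaps; transversality is then obtained by upgrading this monotonicity to a quantitative lower bound on $|\partial_{E}x_{n}|$ on bands (via bounded distortion coming from hyperbolicity of $\Lambda_{\lambda}$) and comparing it with the rate at which the level curves of $x_{n}$ converge to the stable leaves. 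Consequently, the checks that actually need doing to ``repeat the argument'' for general $s$ are combinatorial and algebraic --- that the Raymond-type band structure and the band count persist for the metallic mean trace recursion induced by $T_{s}=U^{s}\circ P$ --- not the dynamical steps you list. As written, your proposal neither reproduces the cited proof nor supplies a valid alternative route to the statement.
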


\begin{cor}
The spectrum $\Sigma_\lambda$ is a dynamically defined Cantor set.
\end{cor}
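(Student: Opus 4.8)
\section*{Proof proposal for the Corollary}

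The plan is to realize $\Sigma_{\lambda}$ as the transverse intersection of the curve of initial condition $\ell_{\lambda}$ with the stable lamination of the hyperbolic set $\Lambda_{\lambda}$ inside the surface $S_{\lambda}$, and then to invoke the standard fact from the theory of hyperbolic surface diffeomorphisms that such an intersection carries the structure of a dynamically defined Cantor set. The only genuinely hard inputs, namely the hyperbolicity of $\Lambda_{\lambda}$ and the transversality of the intersection, have already been recorded above.

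First I would identify the spectrum inside $S_{\lambda}$. Since $\ell_{\lambda}\subset S_{\lambda}$, Theorem \ref{bounded} combined with Theorem \ref{thm2.3} gives
\begin{equation*}
\Sigma_{\lambda}=\bigl\{\,E\in\mathbb{R}\ :\ J_{\lambda}(E)\ \text{lies in the stable lamination }W^{s}(\Lambda_{\lambda})\,\bigr\},
\end{equation*}
and the map $J_{\lambda}\colon E\mapsto\ell_{\lambda}(E)$ is a real-analytic embedding of $\mathbb{R}$ onto $\ell_{\lambda}$ (its last coordinate is $E/2$, so it is injective with real-analytic inverse $(x,y,z)\mapsto 2z$ on $\ell_{\lambda}$). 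Hence $J_{\lambda}$ restricts to a homeomorphism between $\Sigma_{\lambda}$ and $\ell_{\lambda}\cap W^{s}(\Lambda_{\lambda})$. By the Corollary preceding this one, $0\in\Sigma_{\lambda}$, so this intersection is nonempty, and since $\Sigma_{\lambda}$ is the spectrum of the bounded operator $H_{\omega}$ it is compact, hence contained in a compact arc of $\ell_{\lambda}$.

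Next I would bring in the hyperbolic structure. By the theorem quoted above, $\Lambda_{\lambda}$ is a locally maximal transitive hyperbolic subset of the real-analytic surface diffeomorphism $T_{s}|_{S_{\lambda}}$ whose periodic points are dense, i.e.\ a hyperbolic basic set; inside the two-dimensional surface $S_{\lambda}$ its stable lamination is a lamination by one-dimensional $C^{1}$ leaves with local product structure and Cantor transverse structure. By Theorem \ref{transverse}, $\ell_{\lambda}$ is transverse to $W^{s}(\Lambda_{\lambda})$. Choosing a Markov partition $R_{1},\dots,R_{k}$ for $T_{s}|_{\Lambda_{\lambda}}$, the holonomy along stable leaves projects the pieces of $\ell_{\lambda}$ meeting the rectangles onto pieces of unstable leaves; transversality makes this projection a $C^{1}$ diffeomorphism on each piece, so the expansion of $T_{s}$ in the unstable direction induces, after finitely many iterates, an expanding map $f$ on a finite union of disjoint compact subintervals $I_{1},\dots,I_{m}$ of that arc of $\ell_{\lambda}$, with
\begin{equation*}
\ell_{\lambda}\cap W^{s}(\Lambda_{\lambda})=\bigcap_{n\ge 0}f^{-n}\!\left(\bigcup_{j=1}^{m}I_{j}\right).
\end{equation*}
Bounded distortion of $T_{s}$ (a polynomial automorphism, hence real-analytic and $C^{1+\alpha}$) passes to $f$, so this set is a dynamically defined Cantor set: it is nonempty and closed, it has no isolated points because preimages of the dense set of periodic points of $\Lambda_{\lambda}$ accumulate everywhere on it, and it is nowhere dense because the repeller of a $C^{1+\alpha}$ expanding map with more than one branch and proper gaps has zero Lebesgue measure. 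Transporting everything to the energy axis through the real-analytic diffeomorphism $J_{\lambda}^{-1}$ turns $f$ into an expanding map on a finite union of intervals of $\mathbb{R}$ whose repeller is exactly $\Sigma_{\lambda}$.

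The main obstacle is in fact already absorbed by the cited results: the hyperbolicity of $\Lambda_{\lambda}$ and, above all, the transversality statement of Theorem \ref{transverse} are precisely what allow the stable holonomy to be a genuine $C^{1+\alpha}$ diffeomorphism along $\ell_{\lambda}$, which is the crux of the construction. What remains is routine bookkeeping in the style of the Fibonacci Hamiltonian case treated in \cite{Can09} and \cite{Fibonacci}: verifying that the induced return map is uniformly expanding on the chosen intervals, and that the resulting set is perfect and nowhere dense, so that ``dynamically defined Cantor set'' holds in the usual sense.
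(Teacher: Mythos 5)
Your proposal is correct and follows exactly the route the paper intends: the paper states this corollary without proof because it is the standard consequence of Theorem \ref{bounded}, Theorem \ref{thm2.3}, and Theorem \ref{transverse} (identifying $\Sigma_\lambda$ with the transverse intersection of $\ell_\lambda$ with the stable lamination of the locally maximal hyperbolic set $\Lambda_\lambda$), which is precisely the argument you spell out. Your additional details (Markov partition, stable holonomy, induced expanding map) are the standard bookkeeping from the Fibonacci Hamiltonian literature and are consistent with what the cited references do.
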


Now we define the density of states measure. The definition is analogous for higher dimensional models. 

\begin{dfn}\label{density}
Denote by $H^{(N)}_{\omega}$ 
the restriction of $H_{\omega}$ to the interval $[0, N-1]$ with Dirichlet boundary conditions. 
The density of states measure $\nu_{\lambda}$ of $H_{\omega}$ is given by
\begin{equation*}
\nu_{\lambda}\left( (-\infty, E] \right) = 
\lim_{N \to \infty} \frac{1}{N} \# \left\{  \text{eigenvalues of $H^{(N)}_{\omega}$ that are in $(-\infty, E ]$} \right\}, 
\end{equation*}
where $E \in \mathbb{R}$. 
\end{dfn}
The limit does not depend on the specific choice of $\omega$, and depends only on $\lambda$ and $s$. 
In fact, the convergence is uniform in $\omega$.
This was shown in a more general setting \cite{LS2}. 

It is well known that $\Sigma_0 = [-2, 2]$, and 
\begin{equation}\label{freedensity}
\nu_{0} \left( (-\infty, E ] \right) = 
\begin{cases}
0   & E \leq -2 \\  
\frac{1}{\pi} \arccos (-\frac{E}{2}) & -2 < E < 2 \\
1 & E \geq 2.
\end{cases}
\end{equation}

Let us write
\begin{equation*}
\mathbb{S} = S_{0} \cap \left\{ (x, y, z) \in \mathbb{R}^{3} \mid |x| \leq 1, |y| \leq 1, |z| \leq 1 \right\}.
\end{equation*}
The trace map $T_{s}$ restricted to $\mathbb{S}$ is a factor of the hyperbolic automorphism $\mathcal{A}$ of 
$\mathbb{T} = \mathbb{R}^{2} / \mathbb{Z}^{2}$ given by 
\begin{equation*}
\mathcal{A} :
\begin{pmatrix}
\theta \\
\varphi \\
\end{pmatrix}
\mapsto 
\begin{pmatrix}
s & 1 \\
1 & 0 \\
\end{pmatrix}
\begin{pmatrix}
\theta \\
\varphi \\
\end{pmatrix}.
\end{equation*}

\begin{centering}
\begin{figure}[t]
\includegraphics[scale=1.00]{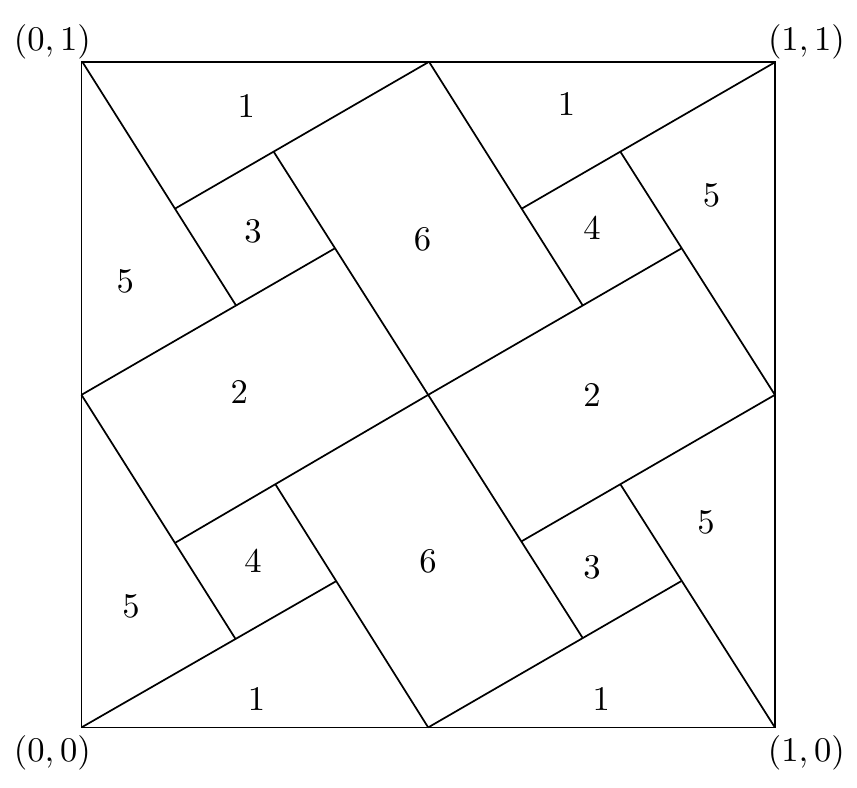}
\caption{The Markov partition for the map $\mathcal{A}$.}
\label{Maarkov}
\end{figure}
\end{centering}

The semi-conjugacy is given by the map 
\begin{equation*}
F : (\theta, \varphi) \mapsto \left( \cos 2\pi (\theta + \varphi), \cos 2\pi \theta, \cos 2\pi \varphi  \right).
\end{equation*}
A Markov partition of  
$\mathcal{A}: \mathbb{T}^2 \to \mathbb{T}^2$ when $s = 1$ is shown in Figure \ref{Maarkov}. Compare with Figure 5 from \cite{Cas}. 
For other values of $s \in \mathbb{N}$, the only difference is the slope of 
the stable and unstable manifolds.
Its image under the map $F: \mathbb{T}^{2} \to \mathbb{S}$ is a Markov partition for the pseudo-Anosov map 
$T_{s} : \mathbb{S} \to \mathbb{S}$.
Write
\begin{equation*}
I = \left\{ (t, t) \mid 0 \leq t \leq 1/2 \right\}.
\end{equation*}
Note that $F(I) \subset \ell_{0}$. The following lemma is immediate:

\begin{lem}
The push-forward of the normalized Lebesgue measure on $I$ 
under the semi-conjugacy $F$, which is a probability measure on $\ell_{0} \cap \mathbb{S}$, 
corresponds to the free density of states measure (\ref{freedensity}) under the identification (\ref{identification}).
\end{lem}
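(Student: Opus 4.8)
The plan is a direct computation carried out in the energy coordinate $E$. First I would restrict $F$ to $I$. Parametrizing $I$ by $t \mapsto (t,t)$, $t \in [0,\tfrac12]$, the normalized Lebesgue measure on $I$ becomes the uniform probability measure $2\,dt$ on $[0,\tfrac12]$, and the double-angle identity $\cos 4\pi t = 2\cos^{2} 2\pi t - 1$ gives
\[
F(t,t) = \bigl( 2\cos^{2} 2\pi t - 1,\ \cos 2\pi t,\ \cos 2\pi t \bigr).
\]
On the other hand, with $\lambda = 0$ (so $a = 1$) the line of initial condition is $\ell_{0} = \bigl\{ \bigl( \tfrac{E^{2}-2}{2},\, \tfrac{E}{2},\, \tfrac{E}{2} \bigr) : E \in \mathbb{R} \bigr\}$, and substituting $E = 2\cos 2\pi t$ yields $\tfrac{E^{2}-2}{2} = 2\cos^{2} 2\pi t - 1$ and $\tfrac{E}{2} = \cos 2\pi t$. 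Hence $F(t,t) = J_{0}(2\cos 2\pi t)$; in particular $F(I) \subset \ell_{0} \cap \mathbb{S}$ as already observed, and $t \mapsto F(t,t)$ is the composition of the strictly decreasing homeomorphism $t \mapsto 2\cos 2\pi t$ from $[0,\tfrac12]$ onto $[-2,2]$ with the identification (\ref{identification}).

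Consequently, transported to $\mathbb{R}$ via (\ref{identification}), the push-forward of the normalized Lebesgue measure on $I$ under $F$ is precisely the law of $E = 2\cos 2\pi t$ with $t$ uniform on $[0,\tfrac12]$, and it remains to compute its distribution function and match (\ref{freedensity}). For $-2 < x < 2$, since $2\pi t \mapsto \cos 2\pi t$ is strictly decreasing on $[0,\pi]$, the set $\bigl\{ t \in [0,\tfrac12] : 2\cos 2\pi t \le x \bigr\}$ equals $\bigl[ \tfrac{1}{2\pi}\arccos\tfrac{x}{2},\, \tfrac12 \bigr]$, whose normalized Lebesgue measure is
\[
2\Bigl( \tfrac12 - \tfrac{1}{2\pi}\arccos\tfrac{x}{2} \Bigr) = 1 - \tfrac{1}{\pi}\arccos\tfrac{x}{2} = \tfrac{1}{\pi}\arccos\Bigl(-\tfrac{x}{2}\Bigr),
\]
using $\arccos(-u) = \pi - \arccos u$; for $x \le -2$ the set is empty and for $x \ge 2$ it is all of $[0,\tfrac12]$, giving normalized measures $0$ and $1$. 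Since under (\ref{identification}) the point $F(t,t)$ corresponds to $E = 2\cos 2\pi t$, this shows the push-forward measure has distribution function $x \mapsto \tfrac{1}{\pi}\arccos(-x/2)$ on $(-2,2)$ and the obvious values outside, which is exactly $\nu_{0}\bigl((-\infty, x]\bigr)$ from (\ref{freedensity}).

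I do not expect a serious obstacle here: the lemma reduces to elementary trigonometry once one recognizes $F(t,t)$ as $J_{0}(2\cos 2\pi t)$. The only points requiring (minor) care are the double-angle identity, verifying that $t \mapsto 2\cos 2\pi t$ is a genuine orientation-reversing homeomorphism $[0,\tfrac12] \to [-2,2]$ so the change of variables is legitimate, and the sign bookkeeping in passing between $\arccos(x/2)$ and $\arccos(-x/2)$. Everything else is immediate from the definitions of $F$, $\ell_{0}$, and $J_{0}$.
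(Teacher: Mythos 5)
Your computation is correct and is exactly the verification the paper has in mind: the paper simply declares the lemma ``immediate'' and supplies no proof, and the intended argument is precisely your observation that $F(t,t)=J_{0}(2\cos 2\pi t)$ together with the elementary distribution computation for $E=2\cos 2\pi t$ with $t$ uniform on $[0,\tfrac12]$. All steps (the double-angle identity, the monotonicity of $t\mapsto 2\cos 2\pi t$, and the identity $\arccos(-u)=\pi-\arccos u$) check out, so there is nothing to add.
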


Consider the union of elements of the Markov partition of $\mathbb{T}^{2}$, 1, 2, 4, 5, and 6, as in Figure \ref{stablemanifolds}. 
Compare with Figure 3 from \cite{DG12}. 
Let us denote the image of this union of elements under $F$ by  
$\mathcal{R}_{0}$, and the continuation of $\mathcal{R}_{0}$ in $\lambda >0$ by  
$\mathcal{R}_{\lambda}$. The following statement can be proven by repeating the proof of Claim 3.2 of \cite{DG12}. 

\begin{prop}\label{projection}
Consider the measure of maximal entropy of $T_{s} |_{\Lambda_{\lambda}}$ and restrict it to 
$\mathcal{R}_{\lambda}$. Normalize 
this measure and project it to $\ell_{\lambda}$ along the stable manifolds. 
Then, the resulting probability measure on $\ell_{\lambda}$ 
corresponds to the density of states measure 
$\nu_{\lambda}$ under the identification (\ref{identification}).
\end{prop}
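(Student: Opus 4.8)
\emph{Proof proposal.} The plan is to transcribe the proof of Claim 3.2 of \cite{DG12}, which establishes exactly this statement for the Fibonacci Hamiltonian (the case $s=1$, on-diagonal, with the line $\ell'_\lambda$), into the present setting. By Remark \ref{reason} and Theorem \ref{bounded} the off-diagonal picture with $\ell_\lambda$ is formally identical to the on-diagonal one with $\ell'_\lambda$ --- same trace map, same invariant surface $S_\lambda$, a curve of initial conditions meeting the stable lamination transversally --- so no geometric argument has to be changed; and for general $s$ the trace map $T_s = U^s\circ P$ is still the $F$-factor of a hyperbolic toral automorphism (Figure \ref{Maarkov}), the only difference from $s=1$ being the slopes of the invariant manifolds, which is irrelevant for everything below. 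I would first settle the base case $\lambda = 0$: there $\Lambda_\lambda$ degenerates to $\mathbb S$, on which $T_s$ is the pseudo-Anosov map that is the $F$-factor of $\mathcal A$, whose unique measure of maximal entropy is Lebesgue measure on $\mathbb T^2$; pushing this forward by $F$, restricting to $\mathcal R_0 = F(\text{boxes }1,2,4,5,6)$, normalizing, and projecting along the images of the stable leaves of $\mathcal A$ onto $\ell_0$, one recovers precisely the push-forward of normalized Lebesgue measure on the diagonal segment $I$ --- this is the point of dropping box $3$: boxes $1,2,4,5,6$ are exactly those whose stable holonomy sweeps $F(I)$ with constant multiplicity. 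By the Lemma preceding the Proposition this is the free density of states measure (\ref{freedensity}) under (\ref{identification}), so the statement holds at $\lambda = 0$.

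For $\lambda > 0$ I would argue by continuation. By the theorem of \cite{Can09} (see also \cite{M}) the set $\Lambda_\lambda$ is a locally maximal hyperbolic set; hence by structural stability it is the continuation of $\Lambda_0$, the Markov partition persists and stays combinatorially the same subshift of finite type, $\mathcal R_\lambda$ is the corresponding continuation of $\mathcal R_0$, the measure of maximal entropy $\mu_\lambda$ is the associated Parry measure and varies weak-$*$ continuously in $\lambda$, and by Theorem \ref{transverse} the holonomy projection of $\mu_\lambda|_{\mathcal R_\lambda}$ onto $\ell_\lambda$ along stable manifolds is a well-defined probability measure $\widetilde\nu_\lambda$. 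To identify $\widetilde\nu_\lambda$ with $\nu_\lambda$ I would use the periodic approximants: for $N = |\mathcal C_s(n)|$ the transfer-matrix formalism of \cite{WM} expresses the band edges of the $N$-periodic approximant of $H_\omega$ in terms of the period-$n$ points of $T_s$, coded by the length-$n$ admissible Markov words, so that the normalized band-counting measure at level $n$ is, cylinder by cylinder, the normalized counting measure of these periodic orbits pushed to $\ell_\lambda$. By Bowen's equidistribution of periodic orbits of a topologically mixing subshift of finite type with respect to its Parry measure, these counting measures converge as $n\to\infty$ to $\widetilde\nu_\lambda$; on the other hand, since the bands shrink uniformly, the same counting measures converge to $\nu_\lambda$ by Definition \ref{density} (and the uniform convergence of \cite{LS2}). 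Hence $\nu_\lambda = \widetilde\nu_\lambda$, and the $\lambda = 0$ case fixes the normalization and the choice of boxes.

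The step I expect to be the main obstacle is the last one: making precise, uniformly in $\lambda$, the dictionary between the band structure of the periodic approximants and the symbolic coding of $\Lambda_\lambda$ by its Markov partition --- in particular which periodic orbits correspond to which band edges, why the exclusion of box $3$ produces exactly $\mathcal R_\lambda$, and why the band-counting measures and the periodic-orbit-counting measures have the same weak-$*$ limit. This is precisely the content that must be carried over from \cite{DG12}; the inputs that make it go through --- transversality (Theorem \ref{transverse}), persistence of the Markov partition, and the uniform shrinking of bands --- are all available here, but the combinatorial bookkeeping itself is the substantive part of the argument.
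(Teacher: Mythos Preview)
Your proposal is correct and is, in fact, considerably more detailed than the paper's own treatment: the paper does not give a proof at all but simply states that the result follows by repeating the proof of Claim~3.2 of \cite{DG12}. Your sketch is precisely that repetition --- same $\lambda=0$ base case via the semi-conjugacy $F$ and the lemma on $I$, same continuation argument via structural stability of $\Lambda_\lambda$ and persistence of the Markov partition, same identification through periodic approximants and Bowen equidistribution --- so the approaches coincide.
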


\begin{centering}
\begin{figure}[t]
\includegraphics[scale=1.00]{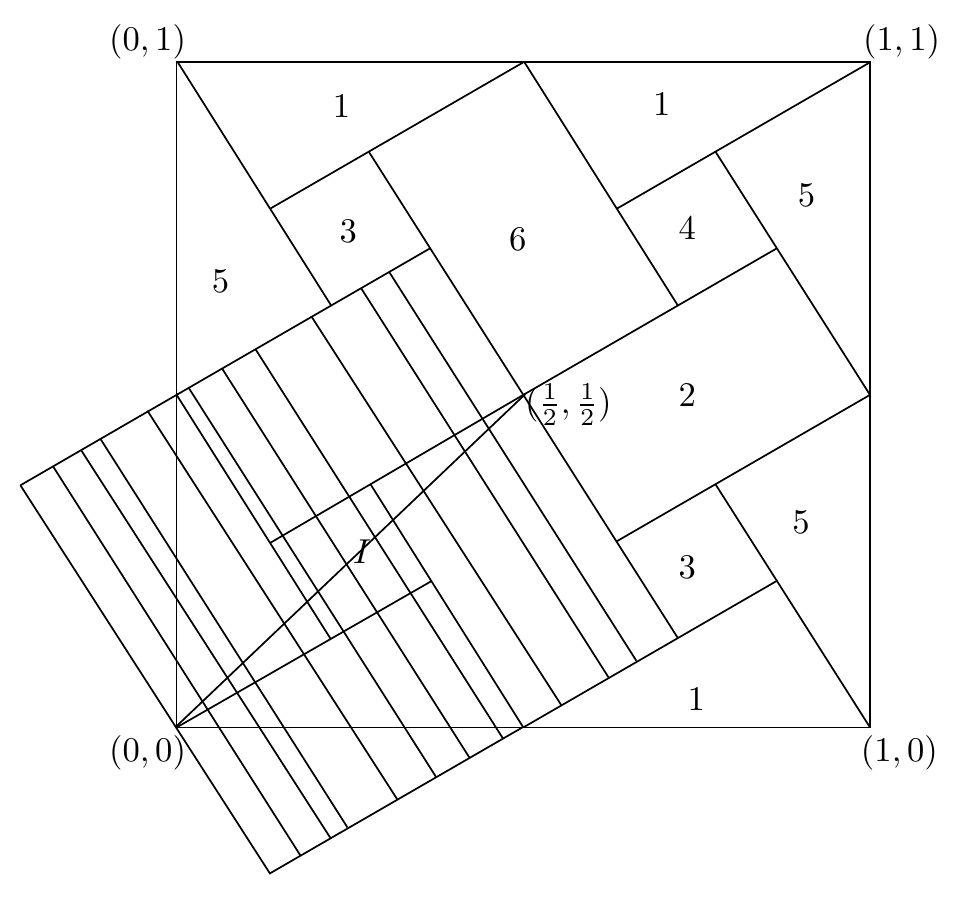}
\caption{A Markov partition for the map $\mathcal{A}$, line segment $I$, and the stable manifolds.}
\label{stablemanifolds}
\end{figure}
\end{centering}

This immediately implies the following:

\begin{thm}\label{dimension}
For every $\lambda > 0$, 
the density of states measure $\nu_{\lambda}$ is exact-dimensional. That is, for $\nu_{\lambda}$-almost 
every $E \in \mathbb{R}$, we have
\begin{equation*}
\lim_{\epsilon \downarrow 0} \frac{\log \nu_{\lambda}( E - \epsilon, E + \epsilon )}{ \log \epsilon } = d_{\lambda}, 
\end{equation*}
where $d_{\lambda}$ satisfies 
\begin{equation*}
\lim_{\lambda \downarrow 0} d_{\lambda} = 1.
\end{equation*}
\end{thm}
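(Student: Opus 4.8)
The plan is to deduce Theorem \ref{dimension} from Proposition \ref{projection} by a standard argument in the thermodynamic formalism for dynamically defined Cantor sets. First I would recall that, by Proposition \ref{projection}, the density of states measure $\nu_\lambda$ is (under the identification $J_\lambda$) the projection along stable manifolds of the normalized restriction to $\mathcal{R}_\lambda$ of the measure of maximal entropy $\mu_\lambda$ of $T_s|_{\Lambda_\lambda}$. Since $\Lambda_\lambda$ is a locally maximal transitive hyperbolic set (by the cited theorem of \cite{Can09}) and the intersection of $\ell_\lambda$ with the stable lamination is transverse (Theorem \ref{transverse}), the stable holonomy from $\mathcal{R}_\lambda$ to $\ell_\lambda$ is a well-defined map, and $\Sigma_\lambda \cong \ell_\lambda \cap W^s(\Lambda_\lambda)$ is a $C^{1+}$ dynamically defined Cantor set whose defining expanding map is (conjugate to) the restriction of $T_s$ to an unstable slice. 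Exact-dimensionality of $\nu_\lambda$ then follows from the general fact that the projection of an equilibrium state of a hyperbolic set to a transverse Cantor set is itself the equilibrium state of a Hölder potential on a cookie-cutter, hence exact-dimensional with dimension given by the unique zero $d_\lambda$ of a pressure equation $P\big(-d_\lambda \log |Df|\big) = h_{\mathrm{top}}$-type relation; one can invoke the Ledrappier–Young / Manning–McCluskey machinery, or simply cite the corresponding statement proven in \cite{DG11} or \cite{DG12} for the Fibonacci Hamiltonian, noting that the proof goes through verbatim for general $s$ since the only change is the slope of the invariant manifolds.

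Next I would establish the limit $\lim_{\lambda\downarrow 0} d_\lambda = 1$. The idea is continuity of the dimension spectrum as $\lambda \to 0$. As $\lambda \downarrow 0$, the line of initial condition $\ell_\lambda$ converges to $\ell_0$, the hyperbolic set $\Lambda_\lambda$ converges (in the appropriate sense, via structural stability of the hyperbolic set $\mathbb{S} \cap \Lambda_\lambda$ and the pseudo-Anosov picture) to $\mathbb{S}$, the measure of maximal entropy $\mu_\lambda$ converges weakly to the measure of maximal entropy of $T_s|_{\mathbb{S}}$ (which is the Lebesgue/SRB measure pulled back via the semi-conjugacy $F$), and by the Lemma preceding Proposition \ref{projection} the projection of this limiting measure to $\ell_0$ is exactly the free density of states measure $\nu_0$ of \eqref{freedensity}, which is absolutely continuous and hence of exact dimension $1$. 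Since $d_\lambda$ is the zero of a pressure equation whose ingredients (the potential $-t\log|Df_\lambda|$ and the topological pressure functional) depend continuously on $\lambda$ — this uses real-analytic dependence of the hyperbolic set and its equilibrium states on the parameter, as in \cite{DG11} — we get $d_\lambda \to d_0 = 1$.

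The main obstacle I expect is the continuity/convergence argument at $\lambda = 0$: the point $\lambda = 0$ is genuinely degenerate, because at $\lambda = 0$ the set $\Lambda_0 \supset \mathbb{S}$ is no longer a Cantor set transverse to $\ell_0$ — the line $\ell_0$ lies \emph{inside} the surface and the "Cantor set" $\Sigma_0 = [-2,2]$ is an interval, so the dynamically defined Cantor set structure degenerates. One must therefore argue that the thermodynamic quantities extend continuously to $\lambda = 0$ even though the geometric picture changes character, i.e. that $d_\lambda$, defined for $\lambda > 0$ via the pressure equation, has limit $1$; the cleanest route is to bound $d_\lambda$ between the box-counting dimension of $\Sigma_\lambda$ (which tends to $1$ as $\lambda \to 0$ by the transversality and the known convergence $\Sigma_\lambda \to [-2,2]$) from above and to use $\nu_\lambda \le$ (some multiple of) the measure of maximal entropy to get the matching lower bound, or alternatively to cite directly the analogous estimate established for the Fibonacci Hamiltonian in \cite{DG11} and observe that all constants are uniform in $s$. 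In fact, since Proposition \ref{projection} and Theorem \ref{dimension} are explicitly stated as following by "repeating the proof" of the corresponding results in \cite{DG11}, \cite{DG12}, the honest plan is to verify that no step of those proofs uses $s = 1$ in an essential way — only the matrix $\left(\begin{smallmatrix} s & 1 \\ 1 & 0 \end{smallmatrix}\right)$ and the slopes of its eigendirections enter — and then quote those theorems.
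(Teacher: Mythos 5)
Your proposal follows essentially the same route as the paper: the paper's proof consists precisely of deducing exact-dimensionality as an immediate consequence of Proposition \ref{projection} and obtaining $\lim_{\lambda \downarrow 0} d_{\lambda} = 1$ by repeating the proof of Theorem 1.1 of \cite{DG12} verbatim, which is exactly the strategy you arrive at in your final paragraph. The additional detail you supply (the pressure equation, the holonomy projection, the continuity discussion at the degenerate parameter $\lambda = 0$) is a correct unpacking of what those cited proofs contain, but it is not a different argument.
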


\begin{proof}
The first claim is an immediate consequence of Proposition \ref{projection}. 
The second claim follows verbatim from the repetition of Theorem 1.1 of \cite{DG12}. 
\end{proof}

We also have the following:

\begin{prop}[\cite{Pollicott}]\label{Lyapunov}
The stable and unstable Lyapunov exponents are analytic functions of $\lambda > 0$.
\end{prop}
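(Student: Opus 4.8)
The plan is to realise $\Lambda_{\lambda}$ symbolically and then reduce the statement to the analytic dependence of an equilibrium state (here the measure of maximal entropy) on parameters, in the spirit of \cite{Pollicott}. First I would observe that $T_{s}$ is a polynomial map of $\mathbb{R}^{3}$ and that the surfaces $S_{\lambda} = \{\, G = \lambda^{2}/4 \,\}$ form an analytic (indeed algebraic) family, so that near any fixed $\lambda_{0} > 0$ one can choose analytic coordinates identifying a neighbourhood of $\Lambda_{\lambda_{0}}$ in $S_{\lambda}$ with a fixed two-dimensional manifold. This turns $T_{s}|_{S_{\lambda}}$ into a real-analytic family of surface diffeomorphisms $f_{\lambda}$ with hyperbolic basic sets $\Lambda_{\lambda}$ depending continuously on $\lambda$; because the $f_{\lambda}$ are built from polynomials, the family in fact admits a holomorphic continuation in $\lambda$, which one will exploit to get real-analyticity via Cauchy estimates.

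Next I would fix a Markov partition for $f_{\lambda_{0}}$ (for instance the one coming from Figure \ref{Maarkov} via $F$), which by structural stability persists with the same transition matrix $A$ for $\lambda$ near $\lambda_{0}$, yielding coding maps $\pi_{\lambda}\colon (\Sigma_{A}, \sigma) \to (\Lambda_{\lambda}, f_{\lambda})$. The measure of maximal entropy of $f_{\lambda}|_{\Lambda_{\lambda}}$ corresponds on $\Sigma_{A}$ to the Parry measure of $A$, which does not depend on $\lambda$; call it $\mu$. Then the unstable Lyapunov exponent is $\chi^{u}(\lambda) = \int_{\Sigma_{A}} \varphi^{u}_{\lambda}\, d\mu$ with $\varphi^{u}_{\lambda} = \log \lVert Df_{\lambda}|_{E^{u}} \rVert \circ \pi_{\lambda}$, and similarly $\chi^{s}(\lambda) = -\int_{\Sigma_{A}} \varphi^{s}_{\lambda}\, d\mu$ with $\varphi^{s}_{\lambda} = \log \lVert Df_{\lambda}|_{E^{s}} \rVert \circ \pi_{\lambda}$.

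The heart of the argument is to show that $\lambda \mapsto \varphi^{u}_{\lambda}$ is a real-analytic map from a neighbourhood of $\lambda_{0}$ into the Banach space $C^{\alpha}(\Sigma_{A})$ of $\alpha$-H\"older functions, for a suitable $\alpha \in (0,1)$; the same for $\varphi^{s}_{\lambda}$. The subtlety is that $\pi_{\lambda}$ itself is only H\"older in $\lambda$, so this is not a direct consequence of the chain rule. Instead one obtains the distributions $E^{s}_{\lambda}, E^{u}_{\lambda}$ over $\Lambda_{\lambda}$ — equivalently the local stable and unstable holonomies pulled back to $\Sigma_{A}$ — as the fixed point of a graph-transform operator whose data (the maps $f_{\lambda}$ and their derivatives, read through the fixed symbolic coding) depend analytically on $\lambda$; the analytic implicit function theorem in the appropriate Banach space of H\"older sections, together with the holomorphic continuation from the first step, then gives analyticity of $E^{u}_{\lambda}$, hence of $\varphi^{u}_{\lambda}$, in $\lambda$ as an element of $C^{\alpha}(\Sigma_{A})$. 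This is precisely the mechanism of \cite{Pollicott}; I would refer there for the details and to \cite{KT} for the underlying hyperbolic-dynamics background. Granting this, integration against the fixed measure $\mu$ is a bounded linear functional on $C^{\alpha}(\Sigma_{A})$, so $\chi^{u}(\lambda) = \int \varphi^{u}_{\lambda}\, d\mu$ is the composition of an analytic Banach-valued map with a continuous linear functional and is therefore real-analytic on $(0,\infty)$; likewise for $\chi^{s}$. (More robustly, one may avoid fixing the measure by invoking Ruelle's theorem that the topological pressure and the equilibrium state of a H\"older potential depend analytically on that potential, so that any analytically varying equilibrium state $\mu_{\lambda}$ works via $\chi^{u}(\lambda) = \mu_{\lambda}(\varphi^{u}_{\lambda})$.)

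The main obstacle is exactly the parameter-regularity statement of the previous paragraph: proving that, although the conjugating homeomorphisms between the $\Lambda_{\lambda}$ are merely H\"older, the induced potentials $\varphi^{u}_{\lambda}, \varphi^{s}_{\lambda}$ on the fixed symbolic space vary real-analytically as elements of $C^{\alpha}(\Sigma_{A})$. Everything else — the symbolic coding, the identification of the measure of maximal entropy with the Parry measure, and the passage from an analytic Banach-valued map to an analytic scalar — is routine once that is in place. I would therefore organise the write-up so that the bulk of the work is the Banach-space fixed-point argument for $E^{u}_{\lambda}$ and $E^{s}_{\lambda}$ with its Cauchy-estimate upgrade to analyticity, stating the rest as short consequences.
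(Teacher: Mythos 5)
The paper gives no argument for this proposition at all --- it is imported verbatim from \cite{Pollicott} --- and your outline is a faithful reconstruction of the mechanism of that reference (a $\lambda$-independent Markov coding carrying the Parry measure, real-analyticity of the potentials $\log\|Df_{\lambda}|_{E^{u/s}}\|\circ\pi_{\lambda}$ as $C^{\alpha}(\Sigma_{A})$-valued maps via a fixed-point/complexification argument, then integration against the fixed measure), so it is essentially the same approach. One slip worth correcting: $\pi_{\lambda}(\xi)$ is only H\"older in the symbolic variable $\xi$, but for each fixed $\xi$ it depends \emph{analytically} on $\lambda$ (shadowing with analytic parameter dependence); that fact --- which your graph-transform/implicit-function step is really establishing, for the points as well as for the bundles $E^{u/s}_{\lambda}$ --- is indispensable, since if the dependence on $\lambda$ were genuinely only H\"older the conclusion would be false.
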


For any Cantor set $K$, we denote the \emph{thickness} of $K$ by $\tau(K)$. For the definition of thickness, see 
for example, chapter 4 of \cite{PalisTakens}. By Theorem 2 of \cite{DGET} and 
by repeating the proof of Theorem 1.1 of \cite{DG11}, 
we obtain the following: 

\begin{thm}\label{thicknesstheorem}
We have 
\begin{equation*}
\lim_{\lambda \to \infty} \dim_{H} \Sigma_{\lambda} = 0, \text{  and \hspace{0.1mm} }
\lim_{\lambda \downarrow 0} \tau( \Sigma_\lambda ) = \infty.
\end{equation*}
\end{thm}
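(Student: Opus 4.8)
\textbf{Proof proposal for Theorem \ref{thicknesstheorem}.}
The plan is to reduce both statements to known facts about the Fibonacci Hamiltonian (the case $s=1$) and the tridiagonal/off-diagonal analysis already developed in \cite{DGET}, \cite{DG11}, \cite{WM}, and then observe that nothing in those arguments is special to $s=1$. Recall that by Theorem \ref{bounded} the spectrum $\Sigma_\lambda$ is exactly the set of $E$ for which the forward $T_s$-semi-orbit of $J_\lambda(E)$ is bounded, and by Theorem \ref{transverse} together with the hyperbolicity of $\Lambda_\lambda$, the set $\Sigma_\lambda = J_\lambda^{-1}(W^s(\Lambda_\lambda))$ is a dynamically defined Cantor set obtained as a transverse intersection of the smooth curve $\ell_\lambda$ with the stable lamination of the hyperbolic set $\Lambda_\lambda$. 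So the thickness and Hausdorff dimension of $\Sigma_\lambda$ are controlled by the geometry of $\Lambda_\lambda$ on the surface $S_\lambda$.

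First I would handle $\lim_{\lambda\to\infty}\dim_H\Sigma_\lambda = 0$. The point here is a distortion/bounded-geometry estimate for the hyperbolic set $\Lambda_\lambda$: as $\lambda\to\infty$ the line of initial condition $\ell_\lambda$ is pushed far from the compact core of the dynamics, the relevant return times along the trace-map orbit become long, and the contraction rates of $T_s$ encountered along $\Sigma_\lambda$ become arbitrarily strong, forcing the dimension of the local stable sets (hence of $J_\lambda^{-1}$ of them) to zero. Quantitatively this is exactly Theorem 2 of \cite{DGET} for $s=1$; for general $s$ one reruns the same Cayley-transform/hyperbolicity bookkeeping for the trace map $T_s=U^s\circ P$, noting that replacing $U$ by $U^s$ only lengthens orbit segments and strengthens the relevant estimates, so the conclusion $\dim_H\Sigma_\lambda\to 0$ persists. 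One can alternatively phrase this via the pressure equation $P(-d_\lambda\log\|DT_s|_{E^u}\|)=0$ defining the dimension and show the Lyapunov exponent entering it blows up as $\lambda\to\infty$.

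For $\lim_{\lambda\downarrow 0}\tau(\Sigma_\lambda)=\infty$ I would follow the proof of Theorem 1.1 of \cite{DG11} essentially verbatim. At $\lambda=0$ one has $\Sigma_0=[-2,2]$, an interval, with $\ell_0$ sitting inside $\mathbb{S}$ tangent to the pseudo-Anosov picture, so formally $\tau(\Sigma_0)=\infty$; the content is a quantitative perturbation statement. Using Proposition \ref{Lyapunov} (analyticity of the Lyapunov exponents in $\lambda>0$) and the transversality of Theorem \ref{transverse}, one controls the ratios of gap lengths to adjacent bridge lengths of the Cantor set $\Sigma_\lambda$ uniformly, showing these ratios tend to $0$ as $\lambda\downarrow 0$, i.e. the thickness diverges. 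Again the only modification from the $s=1$ case in \cite{DG11} is that the stable and unstable directions of $\mathcal{A}$ (and hence of $T_s|_{\mathbb{S}}$) have $s$-dependent slopes, which changes constants but not the limiting behavior.

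The main obstacle is the first limit: establishing $\dim_H\Sigma_\lambda\to 0$ requires genuinely uniform distortion bounds for the non-uniformly-positioned intersection $\ell_\lambda\cap W^s(\Lambda_\lambda)$ as $\lambda\to\infty$, and one must check that the escape-to-infinity estimates of \cite{DGET} — originally tailored to the golden mean trace map — survive the passage to $T_s$ for arbitrary $s$ without the hyperbolicity constants degenerating. The thickness limit, by contrast, is a soft perturbative argument near the completely understood point $\lambda=0$ and should go through with only cosmetic changes.
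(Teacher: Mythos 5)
Your proposal matches the paper's own treatment: the paper obtains Theorem \ref{thicknesstheorem} precisely by invoking Theorem 2 of \cite{DGET} for the large-coupling limit $\dim_{H}\Sigma_{\lambda}\to 0$ and by repeating the proof of Theorem 1.1 of \cite{DG11} for $\tau(\Sigma_{\lambda})\to\infty$ as $\lambda\downarrow 0$, with the (unstated but implicit) observation that these arguments carry over from $s=1$ to the general metallic mean trace maps. The paper supplies no further detail beyond these two citations, so your identification of the references and of the reduction is exactly the intended argument; the extra heuristics you offer are not part of (and not needed for) the paper's proof.
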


\begin{prop}\label{sym}
The density of states measure $\nu_{\lambda}$ is symmetric with respect to the origin. 
In particular, the spectrum $\Sigma_{\lambda}$ is symmetric with respect to the origin.
\end{prop}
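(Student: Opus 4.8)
The plan is to exhibit a symmetry of the off-diagonal operator $H_\omega$ that implements $E \mapsto -E$ on the spectral level and simultaneously preserves the density of states measure, and then to transfer this to the level of the trace map and the orbit picture of Theorem \ref{bounded}. The most transparent route is operator-theoretic: define the unitary $U$ on $\ell^2(\mathbb{Z})$ by $(U\psi)(n) = (-1)^n \psi(n)$. A direct computation with (\ref{ST}) shows $U H_\omega U^{-1} = -H_\omega$, since conjugating by $U$ multiplies each off-diagonal hopping term $\omega(n)$ connecting sites $n$ and $n-1$ by $(-1)^n (-1)^{n-1} = -1$. Hence $H_\omega$ and $-H_\omega$ are unitarily equivalent, so $\sigma(H_\omega) = -\sigma(H_\omega)$, i.e. $\Sigma_\lambda = -\Sigma_\lambda$. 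For the density of states measure, note that $U$ maps the canonical basis to itself up to signs, so it maps the truncation $H^{(N)}_\omega$ with Dirichlet conditions to $-H^{(N)}_\omega$ (again with Dirichlet conditions on the same interval $[0,N-1]$); therefore the eigenvalue counting function of $H^{(N)}_\omega$ on $(-\infty, E]$ equals that of $-H^{(N)}_\omega$ on $(-\infty, E]$, which equals the counting function of $H^{(N)}_\omega$ on $[-E, \infty)$. Passing to the limit in Definition \ref{density} gives $\nu_\lambda((-\infty,E]) = \nu_\lambda([-E,\infty)) = 1 - \nu_\lambda((-\infty,-E))$, and since $\nu_\lambda$ is continuous (it has no atoms — $\Sigma_\lambda$ being a Cantor set for $\lambda>0$, or more simply by the general theory of linearly recurrent subshifts) this yields $\nu_\lambda((-\infty,E]) = \nu_\lambda([-E,\infty])$, which is exactly the statement that $\nu_\lambda$ is symmetric about the origin. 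The symmetry of the spectrum then also follows from the fact that $\Sigma_\lambda$ is the topological support of $\nu_\lambda$, so it is a genuine corollary of the measure statement as the proposition asserts.

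Alternatively — and this may be the version the author prefers, since the whole paper is phrased in terms of the trace map — one can argue entirely on the dynamical side. The key observation is that the map $R: (x,y,z) \mapsto (-x,-y,z)$ commutes with $U$ and $P$, hence with $T_s = U^s \circ P$ (one checks: $U$ sends $(x,y,z)$ to $(2xz-y, x, z)$, and $R$ composed with this equals this composed with $R$ since $2(-x)z - (-y) = -(2xz-y)$; similarly $P$ obviously commutes with $R$). Moreover $R$ preserves $G$ and hence each surface $S_\lambda$, and one verifies directly that $R \circ J_\lambda(E) = J_\lambda(-E)$ from the explicit form of $\ell_\lambda$, because the first two coordinates $\frac{E^2-a^2-1}{2a}$ and $\frac{E}{2a}$... wait, the first coordinate is even in $E$, not odd. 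So the cleaner choice is $R: (x,y,z)\mapsto(x,-y,-z)$, which sends $J_\lambda(E) = \left(\frac{E^2-a^2-1}{2a}, \frac{E}{2a}, \frac{E}{2}\right)$ to $J_\lambda(-E)$, preserves $G$, and commutes with $T_s$ (check: $U$ followed by $R$ sends $(x,y,z)\mapsto(2xz-y,x,z)\mapsto(2x(-z)-(-y) ,\, x? )$ — one must verify $R$ intertwines $U$ with the corresponding involution, which holds because $2x(-z)-(-y) = -(2xz-y)$ matches, and the roles of $y,z$ under $P$ are swapped consistently). Granting that $R$ is a symmetry of $(S_\lambda, T_s)$ conjugating $J_\lambda(E)$ to $J_\lambda(-E)$, boundedness of the forward semi-orbit is preserved, so Theorem \ref{bounded} gives $\Sigma_\lambda = -\Sigma_\lambda$ immediately; and since $R$ maps $\Lambda_\lambda$ to itself, its stable lamination to itself, and $\mathcal{R}_\lambda$ to itself (this last point needs the Markov partition picture to be symmetric under $R$, which it is, being built from the even/odd-symmetric cosine semi-conjugacy $F$), it preserves the measure of maximal entropy and intertwines the stable-manifold projection, so Proposition \ref{projection} gives that $\nu_\lambda$ is symmetric about $0$.

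The steps, in order, are: (1) write down the involution — either $U = (-1)^n$ on $\ell^2(\mathbb{Z})$, or $R$ on $S_\lambda$; (2) verify it conjugates $H_\omega$ to $-H_\omega$ (resp. commutes with $T_s$ and intertwines $J_\lambda(E) \leftrightarrow J_\lambda(-E)$) by direct substitution; (3) conclude $\Sigma_\lambda = -\Sigma_\lambda$; (4) check the involution respects the Dirichlet truncations (resp. the Markov partition, $\Lambda_\lambda$, the stable lamination, and $\mathcal{R}_\lambda$) and hence fixes $\nu_\lambda$ up to the reflection $E \mapsto -E$; (5) invoke continuity of $\nu_\lambda$ to upgrade the cumulative-distribution identity to measure symmetry, and note the spectrum symmetry follows since $\Sigma_\lambda = \operatorname{supp}\nu_\lambda$. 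I expect the only mildly delicate step to be (4) in the dynamical formulation: one must confirm that the particular set $\mathcal{R}_\lambda$ (the image of Markov elements $1,2,4,5,6$) is invariant under the reflection, i.e. that the labeled Markov partition in Figure \ref{stablemanifolds} is symmetric under the relevant involution of $\mathbb{T}^2$; in the operator formulation this subtlety does not arise, so I would lead with the $(-1)^n$ conjugation argument and mention the trace-map version as a remark. There is no serious obstacle — this is a standard bipartite-lattice / chiral-symmetry argument.
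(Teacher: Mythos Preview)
Your primary (operator-theoretic) approach is correct and is essentially the same as the paper's: both use the sign flip $\psi(n)\mapsto(-1)^n\psi(n)$ to show that the eigenvalues of the Dirichlet truncation $H_\omega^{(N)}$ come in pairs $\pm E$, and then pass to the limit in the definition of $\nu_\lambda$. The paper works directly at the level of eigenvectors of $H_\omega^{(N)}$ rather than framing it as a unitary conjugation of the full operator, and it concludes $\nu_\lambda(A)=\nu_\lambda(-A)$ for intervals straight from the eigenvalue symmetry (so your detour through cumulative distribution functions and continuity of $\nu_\lambda$ is unnecessary, though not wrong); the trace-map alternative you sketch is not used in the paper.
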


\begin{proof}
Denote by $H_{\omega}^{(N)}$ the restriction of $H_{\omega}$ to the interval $[0, N-1]$ with Dirichlet boundary conditions.
Let $\psi$ be an eigenvector of $H_{\omega}^{(N)}$ and $E$ be the corresponding eigenvalue.
Let us define $\phi \in l^{2}([0, N-1])$ by 
\begin{equation*}
\phi(n) = (-1)^{n} \psi(n) \ \ ( n = 0, 1, \cdots ,N-1 ).
\end{equation*}
Then, since 
\begin{equation*}
\begin{aligned}
( H_{\omega}^{(N)} \phi )(n) &= \omega(n+1) \phi(n+1) + \omega(n) \phi(n-1) \\
&= (-1)^{n+1} \omega(n+1) \psi(n+1) + (-1)^{n-1} \omega(n) \psi(n-1)  \\
&= (-1)^{n+1} E \, \psi(n) \\
&= -E \, \phi(n),
\end{aligned}
\end{equation*}
$-E$ is also an eigenvalue of $H^{(N)}_{ \omega }$.
Therefore, the set of eigenvalues of $H_{ \omega }^{(N)}$ is symmetric with respect to the origin.
Therefore, for any interval $A \subset (0, \infty)$, 
\begin{equation*}
\begin{aligned}
\nu_{\lambda} \left( A \right) &= \lim_{N \to \infty} \frac{1}{N} \# \left\{  \text{eigenvalues of $H_{ \omega }^{(N)}$ that are in $A$} \right\} \\
&=  \lim_{N \to \infty} \frac{1}{N} \# \left\{  \text{eigenvalues of $H_{ \omega }^{(N)}$ that are in $(-A)$} \right\} \\
&= \nu_{\lambda}(-A).
\end{aligned}
\end{equation*}
This concludes the first claim. Since the spectrum is the topological support of the density of states measure, 
the second claim also follows. 
\end{proof}

\section{The Labyrinth Model}\label{ohohohoho}
Let $a_{i}, b_{i} > 0 \ (i = 1, 2)$ be real numbers, and let $s$ be a positive integer. 
Let $\omega_{i} \in \Omega^{(s)}_{a_{i}, b_{i}} \ (i = 1, 2)$ and $\lambda_i$ be the corresponding coupling constants. 

\subsection{The Labyrinth model}

We define the Labyrinth model. Write 
\begin{equation*}
A^{e} = \left\{ ( m, n ) \in \mathbb{Z}^2 \, | \, m + n \text{ is even}  \right\},  \text{  and }  \
A^{o} = \left\{ ( m, n ) \in \mathbb{Z}^2 \, | \, m + n \text{ is odd} \right\}.
\end{equation*}
Using $\omega_1, \omega_2$, we realign the lattices of $A^e$ and $A^o$.  
See Figure \ref{lalalabyrinth}. We denote this again by $A^e$ and $A^o$ (we use this identification freely). 
We define the operator ${\hat{H}}_{\omega_{1}, \omega_{2}}$, which acts on $l^{2}(A^e \cup A^o)$, by 
 \begin{equation*}
 \begin{aligned}
\left[ \hat{H}_{\omega_{1}, \omega_{2}} \psi \right] (m, n) &=  \omega_{1}(m+1) \omega_{2}(n+1) \psi( m+1, n+1 ) \\
&\hspace{10mm}+\omega_{1}(m+1) \omega_{2}(n) \psi( m+1, n-1 ) \\
&\hspace{14mm}+ \omega_{1}(m) \omega_{2}(n+1) \psi( m-1, n+1 ) \\
&\hspace{24mm} + \omega_{1}(m) \omega_{2}(n) \psi( m-1, n-1 ). 
\end{aligned}
 \end{equation*}
Every lattice is connected diagonally and the strength of the bond is 
equal to the product of the sides of the rectangle.  
With appropriate scaling, we can always assume that $b_i = 1 \ (i= 1, 2)$. 
We assume this scaling throughout this section.
In a similar way, we define the operators $\hat{H}^{e}_{\omega_{1}, \omega_{2}}$ and 
$\hat{H}^{o}_{\omega_{1}, \omega_{2}}$, which act on 
$l^{2}( A^{e})$ and $l^{2}(A^{o} )$ respectively.
From here, we drop the subscripts $\omega_{1}, \omega_{2}$ if no confusion can arise.
Note that 
\begin{equation*}
\hat{H} = \hat{H}^{e} \oplus \hat{H}^{o}.
\end{equation*}
It is natural to expect that the spectral properties of 
$\hat{H}^{e}$ and $\hat{H}^{o}$ are the same, and in fact, 
the spectra and the density of states measures coincide for the three operators. 
For the proof, we need the notion of \emph{Delone dynamical systems}, 
and \emph{linear repetitivity} of Delone dynamical systems. See, for example, \cite{LP}.

\begin{centering}
\begin{figure}[t]
\includegraphics[scale=1.00]{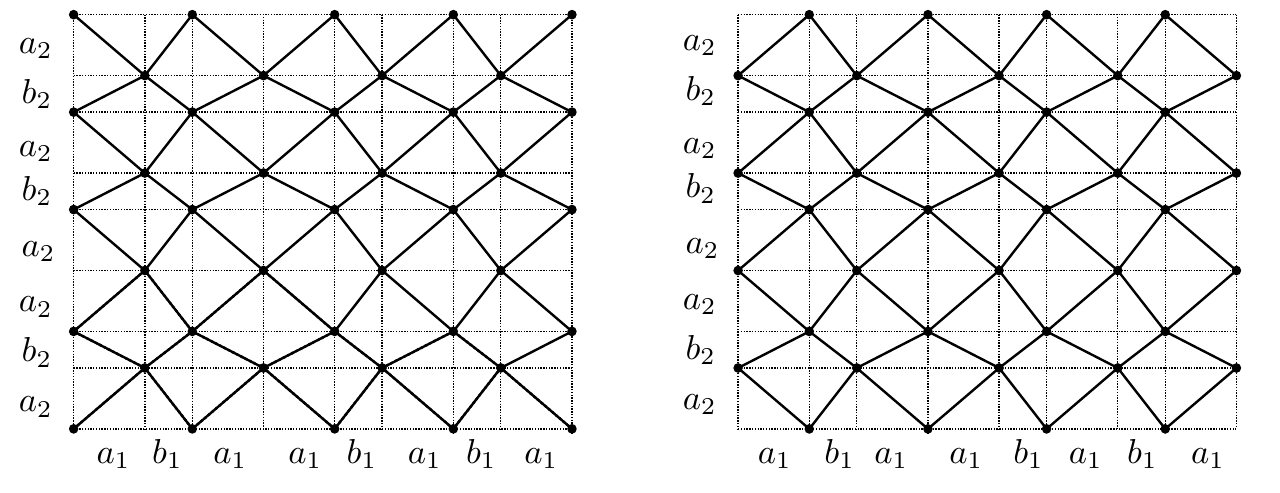}
\caption{$A^{e}$ (left) and $A^{o}$ (right).}
\label{lalalabyrinth}
\end{figure}
\end{centering}

\begin{prop}
Let us denote the density of states measures of $\hat{H}^{e}, \hat{H}^{o}$ and $\hat{H}$ by  
$\hat{\nu}^{e}, \hat{\nu}^{o}$ and $\hat{\nu}$, respectively.
Then, $\hat{\nu}^{e}, \hat{\nu}^{o}$ and $\hat{\nu}$ define the same measure. In particular,  
the spectra of $\hat{H}^{e}, \hat{H}^{o}$ and $\hat{H}$ all coincide.
\end{prop}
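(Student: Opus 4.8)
The plan is to show that the operators $\hat H^e$ and $\hat H^o$ are \emph{affiliated to the same groupoid} of a linearly repetitive Delone dynamical system, so that the general results on uniform existence and $\omega$-independence of the integrated density of states apply simultaneously to both. First I would observe that the realigned lattices $A^e$ and $A^o$ are each Delone sets in $\mathbb{R}^2$: the points of $A^e$ (resp. $A^o$) sit at positions determined by the partial sums of $\omega_1$ and $\omega_2$, and since $(\Omega^{(s)}_{a_i,b_i},T)$ is linearly recurrent the associated tilings of $\mathbb{R}^2$ are linearly repetitive. Crucially, $A^e$ and $A^o$ are translates of each other by a bounded vector (one step in the $\omega_1$-direction, say), and more importantly they generate the \emph{same hull}: the closure of the orbit of the $A^e$-pattern under $\mathbb{R}^2$-translations contains the $A^o$-pattern and conversely, because both arise by the same diagonal-edge construction applied to the same pair of subshifts. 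Hence $\hat H^e$ and $\hat H^o$ are two elements of one and the same family of operators indexed by a single minimal dynamical system $(X,\mathbb{R}^2)$.

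Next I would invoke the abstract theorem of Lenz--Stollmann (the reference \cite{LS2} already cited after Definition~\ref{density}, in its Delone-dynamical-systems form, see also \cite{LP}) which says that for an operator family associated to a linearly repetitive — hence uniquely ergodic — Delone dynamical system, the finite-volume eigenvalue counting functions converge uniformly, the limit (the IDS) is independent of the chosen element of the hull, and it coincides with the spectral distribution of the canonical trace; as a consequence the spectrum is the same for every element and equals the almost sure spectrum of the family. Applying this to $(X,\mathbb{R}^2)$ gives at once that $\hat\nu^e$ and $\hat\nu^o$ are both equal to this one canonical measure, hence equal to each other, and that $\sigma(\hat H^e)=\sigma(\hat H^o)$ equals the common topological support. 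Finally, from $\hat H=\hat H^e\oplus\hat H^o$ one has $\sigma(\hat H)=\sigma(\hat H^e)\cup\sigma(\hat H^o)=\sigma(\hat H^e)$, and for the density of states measure the counting function of $\hat H^{(N)}$ restricted to an $N\times N$ box splits into the contributions from the even and odd sublattices, each of asymptotic density $\tfrac12$ relative to the full box but density $1$ relative to its own sublattice; dividing by the appropriate normalization shows $\hat\nu=\tfrac12\hat\nu^e+\tfrac12\hat\nu^o=\hat\nu^e=\hat\nu^o$.

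The main obstacle — and the step that needs the most care — is the identification of the hulls of $A^e$ and $A^o$, i.e. verifying that these two Delone sets really do lie in one minimal, uniquely ergodic $\mathbb{R}^2$-dynamical system with the linear repetitivity needed to quote \cite{LS2}. This requires writing down the Delone-set structure explicitly from the partial sums of $\omega_1,\omega_2$, checking the finite local complexity and the linear repetitivity estimate (which should follow by a direct two-dimensional analogue of Lemma~\ref{twin}, combining the one-dimensional linear recurrence of $\Omega^{(s)}_{a_1,b_1}$ and $\Omega^{(s)}_{a_2,b_2}$), and confirming that a translation by one half-diagonal step carries the $A^e$-pattern into an element already in the orbit closure of the $A^o$-pattern. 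Everything after that is a routine application of the cited uniform-IDS machinery together with the elementary bookkeeping for the direct sum $\hat H=\hat H^e\oplus\hat H^o$.
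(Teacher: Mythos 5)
Your proposal follows essentially the same route as the paper: the paper's proof likewise establishes linear repetitivity of $A^{e}$ and $A^{o}$ via Lemma \ref{twin}, invokes Theorem 6.1 of \cite{LP} and Theorem 3 of \cite{LS2} to identify $\hat{\nu}^{e}$ with $\hat{\nu}^{o}$, and then uses $\hat{H}=\hat{H}^{e}\oplus\hat{H}^{o}$ for the final identification. The only caveat is that $A^{e}$ and $A^{o}$ are not literal translates of one another (the shift also moves $\omega_{1}$ within its hull), but you correctly flag the hull identification as the delicate step and point to the right tool, namely the odd-twin/even-twin lemma, which is exactly what the paper uses.
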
 

\begin{proof}
By Lemma \ref{twin}, $A^{e}$ and $A^{o}$ are linearly repetitive. 
Therefore, by Theorem 6.1 of \cite{LP}, Theorem 3 of \cite{LS2} and 
Lemma \ref{twin}, we have $\hat{\nu}^{e} = \hat{\nu}^{o}$. Since $\hat{H} = \hat{H}^{e} \oplus \hat{H}^{o}$, 
we get $\hat{\nu}^{e} = \hat{\nu}^{o} = \hat{\nu}$. 
\end{proof} 

By this proposition, we will restrict our attention to $\hat{H}$ below. 

\subsection{The spectrum of the Labyrinth model}

We start by proving that the spectrum of $\hat{H}_{\omega_1, \omega_2}$ is given by the product of the spectra of off-diagonal models. 

\begin{prop}\label{prop1}
We have 
\begin{equation*}
\sigma(\hat{H}_{\omega_{1}, \omega_{2}}) = \Sigma_{\lambda_1} \cdot \Sigma_{\lambda_2}. 
\end{equation*}
In particular, the spectrum $\sigma( \hat{H}_{\omega_{1}, \omega_{2}} )$ does not depend on particular choice of 
$\omega_{1}$ and $\omega_{2}$ and only depends on the coupling constants $\lambda_1, \lambda_2$. 
\end{prop}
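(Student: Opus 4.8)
The plan is to exploit the tensor-product structure hidden in $\hat H_{\omega_1,\omega_2}$. The key observation is that, although $\hat H$ acts on $\ell^2(\mathbb Z^2)$ (or on each of $\ell^2(A^e)$, $\ell^2(A^o)$), the diagonal hopping terms factor: the coefficient of $\psi(m\pm1,n\pm1)$ is a product $\omega_1(\cdot)\omega_2(\cdot)$. Concretely, on $\ell^2(\mathbb Z^2)\cong \ell^2(\mathbb Z)\otimes\ell^2(\mathbb Z)$ one has, with $H_{\omega_1}$, $H_{\omega_2}$ the one-dimensional off-diagonal operators of (\ref{ST}), the identity $\hat H_{\omega_1,\omega_2} = H_{\omega_1}\otimes H_{\omega_2}$ after the change of coordinates $(m,n)\mapsto(m,n)$ that separates the two diagonal directions — this is precisely the reason the Labyrinth spectrum is the \emph{product} rather than the sum. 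So first I would write out this unitary equivalence carefully: introduce the coordinates $u=m$, $v=n$ (or whatever rotation of the lattice makes the diagonal bonds axis-parallel), check on basis vectors $\delta_{(m,n)}$ that $\hat H$ acts as $H_{\omega_1}\otimes I + \dots$ — no, as the genuine tensor product $H_{\omega_1}\otimes H_{\omega_2}$ — and conclude $\hat H_{\omega_1,\omega_2}\cong H_{\omega_1}\otimes H_{\omega_2}$.

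Next I would invoke the standard spectral-mapping fact for tensor products of bounded self-adjoint operators: $\sigma(A\otimes B)=\sigma(A)\cdot\sigma(B)=\{\alpha\beta:\alpha\in\sigma(A),\ \beta\in\sigma(B)\}$. The inclusion $\sigma(A)\cdot\sigma(B)\subset\sigma(A\otimes B)$ is immediate from approximate eigenvectors (tensor an approximate eigenvector of $A$ at $\alpha$ with one of $B$ at $\beta$). For the reverse inclusion one uses that $A\otimes B = (A\otimes I)(I\otimes B)$ with the two commuting factors, and functional calculus / joint spectrum of the commuting pair $(A\otimes I, I\otimes B)$, whose joint spectrum is $\sigma(A)\times\sigma(B)$; then $A\otimes B$ is the image under the continuous map $(\alpha,\beta)\mapsto\alpha\beta$, so its spectrum is $\sigma(A)\cdot\sigma(B)$. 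Applying this with $A=H_{\omega_1}$, $B=H_{\omega_2}$ and recalling from the paragraph after Theorem~\ref{bounded} that $\sigma(H_{\omega_i})=\Sigma_{\lambda_i}$ depends only on $\lambda_i$, we get $\sigma(\hat H_{\omega_1,\omega_2})=\Sigma_{\lambda_1}\cdot\Sigma_{\lambda_2}$, and in particular independence of the choice of $\omega_1,\omega_2$.

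One subtlety to handle cleanly: the operator in the statement acts on $\ell^2(A^e\cup A^o)$ with the realigned lattices, and the natural tensor-product identification sits most transparently on the even sublattice $A^e$ (where $(m,n)\mapsto\big(\tfrac{m+n}{2},\tfrac{m-n}{2}\big)$ is a bijection onto $\mathbb Z^2$). By the preceding Proposition the spectra of $\hat H^e$, $\hat H^o$ and $\hat H$ all coincide, so it suffices to carry out the tensor-product computation for $\hat H^e$ and then transfer the conclusion. I would therefore: (i) fix the coordinate bijection $A^e\to\mathbb Z^2$; (ii) verify $\hat H^e\cong H_{\omega_1}\otimes H_{\omega_2}$ on the nose by evaluating on basis vectors; (iii) apply the tensor spectral-mapping theorem; (iv) quote the earlier Proposition to pass to $\hat H$.

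The main obstacle is step (i)–(ii): getting the lattice relabeling right so that the four diagonal hoppings $\psi(m\pm1,n\pm1)$ split correctly into a $\omega_1$-part in one new coordinate and a $\omega_2$-part in the other, including matching the index arguments $\omega_1(m+1)$ versus $\omega_1(m)$ with the shift directions of $H_{\omega_1}$ after the change of variables. Everything after that — the spectral-mapping theorem for tensor products — is standard functional analysis and needs only to be cited or stated.
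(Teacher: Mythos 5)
Your proposal is correct and follows essentially the same route as the paper: identify $\hat H_{\omega_1,\omega_2}$ with $H_{\omega_1}\otimes H_{\omega_2}$ via the canonical unitary $l^2(\mathbb Z)\otimes l^2(\mathbb Z)\cong l^2(\mathbb Z^2)$ and then apply the spectral mapping theorem for tensor products (the paper cites Theorem VIII.33 of Reed--Simon). The only difference is that your worry about a lattice rotation in steps (i)--(ii) is unnecessary --- the formula (\ref{Labyrinth}) already factors coordinatewise in $(m,n)$, so the verification on elementary tensors is a one-line computation with no change of variables.
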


In the proof below, we simply write $H_{\omega_1}, H_{\omega_2}$ and $\hat{H}_{\omega_{1}, \omega_{2}}$ as 
$H_1, H_2$ and $\hat{H}$, respectively. 
\begin{proof}
Let $U$ be the unique unitary map from $l^{2}(\mathbb{Z}) \otimes l^{2}(\mathbb{Z})$ to $l^{2}({\mathbb{Z}}^{2})$ so that 
for $ \psi_{1}, \psi_{2} \in  l^{2}(\mathbb{Z})$, the elementary tensor $\psi_{1} \otimes \psi_{2}$ is mapped to the element
$\psi$ of $l^{2}({\mathbb{Z}}^{2})$ given by $\psi( m, n ) = \psi_{1}(m) \psi_{2}(n)$. We have 
\begin{equation*}
\begin{aligned}
\left[ \hat{H} U( \psi_{1} \otimes \psi_{2} ) \right](m, n) &= 
\omega_{1}(m+1) \omega_{2}(n+1) \psi_{1}(m+1)\psi_{2}(n+1) \\
&\hspace{10mm}+\omega_{1}(m+1) \omega_{2}(n) \psi_{1}(m+1)\psi_{2}(n-1) \\
&\hspace{14mm}+ \omega_{1}(m) \omega_{2}(n+1) \psi_{1}(m-1)\psi_{2}(n+1) \\
&\hspace{24mm} + \omega_{1}(m) \omega_{2}(n) \psi_{1}(m-1)\psi_{2}(n-1) \\
&= \left[ H_{1} \psi_{1} \right] (m) \left[ H_{2} \psi_{2} \right] (n) \\
&= \left[ U ( H_{1} \otimes H_{2} )( \psi_{1} \otimes \psi_{2} ) \right] ( m, n ),
\end{aligned}
\end{equation*}
for all $(m, n) \in \mathbb{Z}^{2}$. 
Therefore,  
\begin{equation*}
( U^{*} \hat{H} U ) ( \psi_{1} \otimes \psi_{2} ) = 
( H_{1} \otimes H_{2} ) ( \psi_{1} \otimes \psi_{2} ). 
\end{equation*}
Since the linear combinations of elementary tensors are dense in $l^{2}(\mathbb{Z}) \otimes l^{2}(\mathbb{Z})$, we get  
$U^{*} \hat{H} U = H_{1} \otimes H_{2}$.
Therefore, the result follows from Theorem VIII 33 of \cite{RS}.
\end{proof}

Let us denote the spectrum of $\hat{H}_{ \omega_1, \omega_2 }$ by $\hat{\Sigma}_{\lambda_1, \lambda_2}$. 
\begin{proof}[Proof of Theorem \ref{mainnnnntheorem}]
By Theorem \ref{thicknesstheorem}, we have 
\begin{equation*}
\tau( \Sigma_{\lambda_1} ), \tau( \Sigma_{\lambda_2} ) > 1 + \sqrt{2}
\end{equation*}
for sufficiently small coupling constants. Therefore, by Theorem 1.4 of \cite{YT}, 
$\Sigma_{\lambda_1} \cdot \Sigma_{\lambda_2}$ is an interval for sufficiently small $\lambda_1, \lambda_2$. 
Combining this with Proposition \ref{prop1}, $\hat{\Sigma}_{\lambda_1, \lambda_2}$ is an interval for sufficiently small coupling constants. 

By Theorem \ref{thicknesstheorem}, we get 
\begin{equation*}
\dim_{H} \Sigma_{\lambda_1} + \dim_{H} \Sigma_{\lambda_2} < 1
\end{equation*}
for sufficiently large $\lambda_1, \lambda_2$. Notice that, by the symmetry of $\Sigma_{\lambda_1}$ and $\Sigma_{\lambda_2}$, we have 
\begin{equation*}
\hat{\Sigma}_{\lambda_1, \lambda_2}^+ = \exp \left( \log \Sigma_{\lambda_1}^+ + \log \Sigma_{\lambda_2}^+ \right), 
\end{equation*}
where $A^+$ denotes $A \cap (0, \infty)$. Therefore, since 
\begin{equation*}
\dim_{H} \log \Sigma^+_{\lambda_i} = \dim_{H} \Sigma_{\lambda_i}  \ (i = 1, 2), 
\end{equation*}
by Proposition 1 of chapter 4 in \cite{PalisTakens}, $\log \Sigma_{\lambda_1}^+ + \log \Sigma_{\lambda_2}^+$ has zero Lebesgue measure.  
Hense, for sufficiently large $\lambda_1, \lambda_2$, 
$\hat{\Sigma}_{\lambda_1, \lambda_2}$ is a Cantor set of zero Lebesgure measure.
\end{proof}

\subsection{Density of states measure of the Labyrinth model}

In this section, we will prove Theorem \ref{maintheorem2}. If there is no chance of confusion, we simply write the 
density of states measures of $\hat{H}$, $H_1$ and $H_2$ as $\hat{\nu}$, $\nu_1$ and $\nu_2$, respectively. 

\begin{proof}[proof of (\ref{density1})]
The proof is essentially the repetition of the proof of Proposition A.3 of  \cite{DG13}. 
For the reader's convenience, we will repeat the argument. 

Denote by $H^{(N)}_{j} \, ( j = 1, 2 )$ the restriction of $H_{j}$ to the interval $[0, N-1]$ with Dirichlet boundary conditions.
Denote the corresponding eigenvalues and eigenvectors by 
$E^{(N)}_{j, k}, \, \phi^{(N)}_{j, k}$, where $ j = 1, 2 \text{ and }  1 \leq k \leq N$. Recall that we have
\begin{equation}\label{convergence_in_distribution}
\lim_{N \to \infty} \frac{1}{N} \# \left\{  1 \leq k \leq N \mid E^{(N)}_{j, k} \in (-\infty, E]  \right\} = \nu_{j} \left( (-\infty, E] \right)
\end{equation}
for $E \in \mathbb{R}$.

Similarly, we denote by $\hat{H}^{(N)}$ the restriction of $\hat{H}$ to $[0, N-1]^{2}$ with Dirichlet boundary conditions. 
Denote the corresponding eigenvalues and eigenvectors by 
$E^{(N)}_{k}, \phi^{(N)}_{k} \ ( 1 \leq k \leq N^{2} )$. Then, we have 
\begin{equation*}
\lim_{N \to \infty} \frac{1}{N^{2}} \# \left\{  1 \leq k \leq N^{2} \mid E^{(N)}_{k} \in (-\infty, E]  \right\} = \hat{\nu} \left( (-\infty, E] \right). 
\end{equation*}
The eigenvectors $\phi^{(N)}_{j, k}$ of $H^{(N)}_{j}$ form an orthonormal basis of $l^{2}( [ 0, N-1 ] )$. 
Thus, the associated elementary tensors 
\begin{equation}\label{eigenvector}
\phi^{(N)}_{1, k_{1}} \otimes \phi^{(N)}_{2, k_{2}}  \ \ ( 1 \leq k_{1}, k_{2} \leq N )
\end{equation}
form an orthonormal basis of $l^{2}( [ 0, N-1 ] ) \otimes l^{2} ( [ 0, N-1 ] )$, which is canonically isomorphic to 
$l^{2}( [ 0, N-1 ]^{2} )$.
Moreover, the vector in (\ref{eigenvector}) is an eigenvector of $\hat{H}^{(N)}$, corresponding to the eigenvalue 
$E^{(N)}_{1, k_{1}} \cdot E^{(N)}_{2, k_{2}}$. By counting dimensions, these eigenvalues exhaust the entire set 
$\left\{ E^{(N)}_{k} \mid 1 \leq k \leq N^{2}  \right\}$. Therefore, for $E \in \mathbb{R}$, 
\begin{equation*}
\# \left\{ 1 \leq k \leq N^{2}  \mid E^{(N)}_{k} \in (-\infty, E]   \right\} = 
\# \left\{ 1 \leq k_{1}, k_{2} \leq N  \mid E^{(N)}_{1, k_{1}}  \cdot E^{(N)}_{2, k_{2}} \in (-\infty, E]  \right\}.
\end{equation*}

Let $\nu^{(N)}_{j} \ ( j = 1, 2 )$ be the probability measures on $\mathbb{R}$  
with $\nu^{(N)}_{j}( E^{(N)}_{j, k} ) = 1/N \ ( k = 1, 2, \cdots ,N )$.
Similarly, Let $\hat{\nu}^{(N)}$ be the probability measure on $\mathbb{R}$ with
$\hat{\nu}^{(N)}(E^{(N)}_{k}) = 1/N^{2} \ ( k = 1, 2, \cdots ,N^2 )$.
Then, by the above argument, we get 
\begin{equation*}
\hat{\nu}^{(N)} \left( (-\infty, E]  \right) = \iint_{\mathbb{R}^2} \chi_{(-\infty, E]}(xy) \, 
d\nu_1^{(N)}(x) d\nu_{2}^{(N)}(y). 
\end{equation*}
By (\ref{convergence_in_distribution}), $\nu^{(N)}_{i}$ converges weakly to $\nu_{i}$ 
(see, for example, chapter 13 of \cite{AK}). 
Therefore, $\nu^{(N)}_1 \times \nu^{(N)}_2$ converges weakly to $\nu_1 \times \nu_2$. By Theorem 13.16 of 
\cite{AK}, we have 
\begin{equation*}
\lim_{N \to \infty} \iint_{\mathbb{R}^2} \chi_{(-\infty, E]}(xy) \, d\nu^{(N)}_1(x) d\nu^{(N)}_2(y)
= \iint_{\mathbb{R}^{2}} \chi_{( -\infty, E ]}(xy) \, d\nu_{1}(x) d\nu_{2}(y). 
\end{equation*}
The result follows from this. 
\end{proof}

Let us define Borel measures $\bar{\nu}_{i} \, ( i = 1, 2 )$ on $\mathbb{R}$ by  
\begin{equation*}
\bar{\nu}_{i}(A) = \nu_{i} ( e^{A} ), 
\end{equation*}
where $A \subset \mathbb{R}$ is a Borel set. 
Then, the following holds.

\begin{lem}\label{proposition}
The density of states measure of the Labyrinth model $\hat{\nu}$ is given by
\begin{equation*}
\hat{\nu}(A) = 2 \left\{ ( \bar{\nu}_{1} \ast \bar{\nu}_{2} ) ( \log A^{+}) +  
( \bar{\nu}_{1} \ast \bar{\nu}_{2} )( \log A^{-}) \right\},
\end{equation*}
where $A$ is a Borel set, and $A^{+} = A \cap ( 0, \infty )$ and $A^{-} = (-A) \cap ( 0, \infty )$.
\end{lem}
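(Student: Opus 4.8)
The plan is to derive this formula directly from the integral representation of $\hat\nu$ established in the preceding lemma, namely that for every $E\in\mathbb{R}$,
\begin{equation*}
\hat{\nu}\left( (-\infty, E] \right) =
\iint_{\mathbb{R}^{2}} \chi_{(-\infty, E]}(xy) \, d\nu_{\lambda_{1}}(x)\,d\nu_{\lambda_{2}}(y).
\end{equation*}
The key structural inputs are that, by Proposition \ref{sym}, each $\nu_i$ is symmetric about the origin and carries no atom at $0$ (since $\{0\}$ is a single point inside the Cantor set $\Sigma_{\lambda_i}$ of positive dimension, hence null for the non-atomic $\nu_i$; more simply, $\nu_i$ is continuous as noted after Definition \ref{density}). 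Thus $\nu_i$ splits as $\nu_i = \tfrac12(\nu_i^+ + \nu_i^-)$ where $\nu_i^\pm$ is the restriction of $\nu_i$ to $(0,\infty)$ resp. $(-\infty,0)$, each of total mass $\tfrac12$, and $\nu_i^-$ is the push-forward of $\nu_i^+$ under $x\mapsto -x$.

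First I would reduce to understanding $\hat\nu$ on the positive and negative half-lines separately. Fix a Borel set $A$; by inner regularity it suffices to treat $A\subset(0,\infty)$ and $A\subset(-\infty,0)$ (the origin being null for $\hat\nu$ as well, since $0\in\hat\Sigma$ has $\hat\nu(\{0\})=0$ by the product structure and continuity of the $\nu_i$). Splitting the double integral over the four sign-quadrants of $\mathbb{R}^2\setminus(\text{axes})$, the product $xy$ is positive exactly on the $(+,+)$ and $(-,-)$ quadrants and negative on the mixed ones. Using the symmetry $\nu_i^- = (-\,\cdot\,)_*\nu_i^+$, the $(+,+)$ and $(-,-)$ contributions to $\hat\nu(A)$ for $A\subset(0,\infty)$ are equal, each giving $\iint_{(0,\infty)^2}\chi_A(xy)\,d\nu_1^+(x)\,d\nu_2^+(y)$; hence $\hat\nu(A) = 2\iint_{(0,\infty)^2}\chi_A(xy)\,d\nu_1^+(x)\,d\nu_2^+(y)$. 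Similarly for $A\subset(-\infty,0)$ one gets twice the mixed-quadrant integral, which again by symmetry equals $2\iint_{(0,\infty)^2}\chi_{-A}(xy)\,d\nu_1^+(x)\,d\nu_2^+(y)$.

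Next I would transport everything through the logarithm. For $A\subset(0,\infty)$ the change of variables $x = e^u$, $y = e^v$ turns $\iint_{(0,\infty)^2}\chi_A(xy)\,d\nu_1^+(x)\,d\nu_2^+(y)$ into $\iint_{\mathbb{R}^2}\chi_{\log A}(u+v)\,d\mu_1(u)\,d\mu_2(v)$, where $\mu_i$ is the push-forward of $\nu_i^+$ under $\log$. This double integral is by definition $(\mu_1\ast\mu_2)(\log A)$. It remains to identify $\mu_i$ with $\bar\nu_i$: indeed $\bar\nu_i(B) = \nu_i(e^B)$ by definition, and since $e^B\subset(0,\infty)$ always, $\nu_i(e^B) = \nu_i^+(e^B) = \mu_i(B)$, so $\mu_i = \bar\nu_i$. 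Combining the two cases gives exactly
\begin{equation*}
\hat{\nu}(A) = 2\left\{ (\bar\nu_1\ast\bar\nu_2)(\log A^+) + (\bar\nu_1\ast\bar\nu_2)(\log A^-) \right\}.
\end{equation*}

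The only genuinely delicate point is bookkeeping at zero: one must confirm that the axes $\{x=0\}$ and $\{y=0\}$ in the $(x,y)$-integral, and correspondingly $0\in\hat\Sigma$, contribute nothing, so that no half-mass is lost or double-counted when splitting into quadrants. This follows from continuity of $\nu_{\lambda_i}$ (stated after Definition \ref{density}) — $\nu_i(\{0\})=0$, hence the axes are $(\nu_1\times\nu_2)$-null — together with the fact that the integral representation determines $\hat\nu$ uniquely as a measure. Everything else is a routine application of Fubini, the symmetry in Proposition \ref{sym}, and the multiplicative-to-additive change of variables; I would present it in that order, stating the zero-mass remark first so the quadrant decomposition is clean.
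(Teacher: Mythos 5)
Your proof is correct and follows essentially the same route as the paper's: Fubini plus the exponential change of variables to turn the positive-quadrant integral into $(\bar{\nu}_{1} \ast \bar{\nu}_{2})(\log A)$ for $A \subset (0,\infty)$, then Proposition \ref{sym} to supply the factor of $2$ and the negative part, with you merely being more explicit about the quadrant decomposition and the null mass at the origin. One cosmetic slip: with $\nu_i^{\pm}$ the unnormalized restrictions of mass $\tfrac12$, the decomposition reads $\nu_i = \nu_i^{+} + \nu_i^{-}$ rather than $\tfrac12(\nu_i^{+} + \nu_i^{-})$, but this does not propagate into the rest of your computation.
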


\begin{proof}
Let $A \subset (0, \infty)$ be a Borel set.
Using Fubini's Theorem and change of coordinates, we get
\begin{equation*}
\begin{aligned}
\int_{\mathbb{R}^{+}} \int_{\mathbb{R}^{+}} \chi_{A}(xy) \, d\nu_{1}(x)d\nu_{2}(y) 
&= \int_{\mathbb{R}^{+}} \left(  \int_{\mathbb{R}^+{}} \chi_{A} (xy) \, d\nu_{1}(x)  \right) d\nu_{2}(y) \\
&= \int_{\mathbb{R}^{+}} \left(  \int_{\mathbb{R}} \chi_{A} \left( e^x y \right) d\bar{\nu}_{1}(x)  \right)  d\nu_{2}(y) \\
&= \int_{\mathbb{R}} \int_{\mathbb{R}} \chi_{A}  ( e^x e^y ) \, d\bar{\nu}_{1}(x) d \bar{\nu}_{2}(y) \\
&= \iint_{\mathbb{R}^{2}} \chi_{\log A} (x+y) \, d\bar{\nu}_{1}(x) d\bar{\nu}_{2}(y) \\
&= ( \bar{\nu}_{1} \ast \bar{\nu}_{2} )(\log A).
\end{aligned}
\end{equation*}
Combining this with Proposition \ref{sym}, the result follows. 
\end{proof}
Therefore, the absolute continuity of $\hat{\nu}$ is equivalent to 
the absolute continuity of $\bar{\nu}_{1} \ast \bar{\nu}_{2}$. 

Theorem 3.2 from \cite{DG13} implies the following: 

\begin{thm}\label{thmDG12}
Let $\mathcal{J} \subset \mathbb{R}$ be an interval. 
Assume that for $\lambda \in \mathcal{J}$, $\nu_{\lambda}$ is the density of states measure of $H_{\lambda}$. 
Let $\gamma : \mathbb{R} \to \mathbb{R}$ 
be a diffeomorphism, and define a Borel measure $\mu_{\lambda}$ by 
\begin{equation*}
\mu_{\lambda}( A ) = \nu_{\lambda} ( \gamma(A) ), 
\end{equation*}
where $A \subset \mathbb{R}$ is a Borel set. 
Then, for any compactly supported exact-dimensional measure $\eta$ on $\mathbb{R}$ with
\begin{equation*}
\dim_{H} \eta + \dim_{H} \mu_{\lambda} > 1 
\end{equation*}
for all $\lambda \in \mathcal{J}$, the convolution $\eta \ast \mu_{\lambda}$ is absolutely continuous 
with respect to Lebesgue measure for almost every $\lambda \in \mathcal{J}$.  
\end{thm}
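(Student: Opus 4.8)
The plan is to deduce Theorem~\ref{thmDG12} from Theorem~3.2 of \cite{DG13}, whose proof rests on the transversality method of Peres and Schlag; the only thing that really needs checking is that the family $\{\mu_\lambda\}_{\lambda\in\mathcal{J}}$ has all the features of $\{\nu_\lambda\}_{\lambda\in\mathcal{J}}$ that enter that argument. First I would record the elementary reduction: $\mu_\lambda=(\gamma^{-1})_{\ast}\nu_\lambda$, so $\mu_\lambda$ is a compactly supported probability measure with $\operatorname{supp}\mu_\lambda=\gamma^{-1}(\Sigma_\lambda)$; since $\gamma^{-1}$ is bi-Lipschitz on the compact set $\Sigma_\lambda$, the measure $\mu_\lambda$ is exact-dimensional with $\dim_{H}\mu_\lambda=\dim_{H}\nu_\lambda$. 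Moreover, by Proposition~\ref{projection}, under the identification~(\ref{identification}) the measure $\nu_\lambda$ is obtained by restricting the measure of maximal entropy of $T_s|_{\Lambda_\lambda}$ to $\mathcal{R}_\lambda$, normalizing, and projecting to $\ell_\lambda$ along the stable manifolds; composing this projection with the fixed diffeomorphism $\gamma^{-1}$ realizes $\mu_\lambda$ as a measure of exactly the same dynamically defined type, with the same dependence on the parameter $\lambda$.

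Next I would recall the mechanism behind Theorem~3.2 of \cite{DG13}. To conclude that $\eta\ast\mu_\lambda$ is absolutely continuous for almost every $\lambda\in\mathcal{J}$ it suffices, by Plancherel, to prove that $\eta\ast\mu_\lambda\in L^{2}$ for almost every $\lambda$; since $\widehat{\eta\ast\mu_\lambda}=\widehat{\eta}\,\widehat{\mu_\lambda}$ and both factors are Fourier transforms of finite compactly supported measures, this follows once one establishes
\begin{equation*}
\int_{\mathcal{J}}\int_{\mathbb{R}}|\widehat{\eta}(\xi)|^{2}\,|\widehat{\mu_\lambda}(\xi)|^{2}\,d\xi\,d\lambda<\infty .
\end{equation*}
By Fubini this equals $\int_{\mathbb{R}}|\widehat{\eta}(\xi)|^{2}\bigl(\int_{\mathcal{J}}|\widehat{\mu_\lambda}(\xi)|^{2}\,d\lambda\bigr)\,d\xi$, and the task reduces to a transversality estimate of the form
\begin{equation*}
\int_{\mathcal{J}}|\widehat{\mu_\lambda}(\xi)|^{2}\,d\lambda\ \le\ C_{t}\,|\xi|^{-t}\qquad\text{for every }t<\dim_{H}\mu_\lambda .
\end{equation*}
Granting this, the exact-dimensionality of $\eta$ controls $\int_{\mathbb{R}}|\widehat{\eta}(\xi)|^{2}\,|\xi|^{-t}\,d\xi$, which is finite whenever $t>1-\dim_{H}\eta$ (after the standard exhaustion of $\eta$ by subsets on which its Frostman exponent is uniform), and the hypothesis $\dim_{H}\eta+\dim_{H}\mu_\lambda>1$ is exactly what guarantees an admissible choice $t\in(1-\dim_{H}\eta,\dim_{H}\mu_\lambda)$.

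The one substantive ingredient is therefore the transversality estimate for $\{\mu_\lambda\}$, and this is where I expect the only real work to lie. For $\{\nu_\lambda\}$ this is precisely what is proved in \cite{DG13}, using the description in Proposition~\ref{projection}, the analyticity of the Lyapunov exponents (Proposition~\ref{Lyapunov}), and the transversality of $\ell_\lambda$ with the stable lamination (Theorem~\ref{transverse}), which together provide the nondegeneracy of the $\lambda$-dependence required to run the Peres--Schlag scheme. Since $\mu_\lambda$ arises from $\nu_\lambda$ by post-composition with the single, $\lambda$-independent diffeomorphism $\gamma^{-1}$ --- which merely reparametrizes the spectral axis and leaves the oscillatory dependence of $\widehat{\mu_\lambda}(\xi)$ on $\lambda$ intact --- the same transversality estimate transfers to $\{\mu_\lambda\}$, with $\dim_{H}\mu_\lambda=\dim_{H}\nu_\lambda$ in the exponent. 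Carrying out this transfer while checking that none of the quantitative constants in the scheme degenerate under $\gamma^{-1}$ on the compact set $\Sigma_\lambda$ is the main, though essentially routine, obstacle; with it in hand, Theorem~\ref{thmDG12} follows.
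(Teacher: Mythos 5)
Your proposal is correct and follows essentially the same route as the paper, which offers no proof of Theorem~\ref{thmDG12} beyond the assertion that it is implied by Theorem~3.2 of \cite{DG13}. Your additional observations --- that $\mu_\lambda=(\gamma^{-1})_{\ast}\nu_\lambda$ remains compactly supported and exact-dimensional with $\dim_{H}\mu_\lambda=\dim_{H}\nu_\lambda$ because $\gamma^{-1}$ is bi-Lipschitz on compacts, and the sketch of the Peres--Schlag transversality mechanism underlying \cite{DG13} --- are consistent with and more detailed than what the paper itself records.
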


\begin{proof}[proof of theorem \ref{maintheorem2}]
By Theorem \ref{mainnnnntheorem}, $\hat{\Sigma}_{\lambda_1, \lambda_2}$ is a Cantor set of zero Lebesgue measure 
for sufficiently large coupling constants. 
Therefore, since the density of states measure $\hat{\nu}_{\lambda_1, \lambda_2}$ is supported on $\hat{\Sigma}_{\lambda_1, \lambda_2}$, 
it has to be singular continuous for sufficiently large $\lambda_1, \lambda_2$. 

By Theorem \ref{dimension},  
there exists $\lambda^{\ast} > 0$ such that 
$\dim_{H} \nu_{\lambda} > \frac{1}{2}$ for all  $\lambda \in [0, \lambda^{\ast})$.
Recall that $0 \in \Sigma_{\lambda}$. Recall also that $\Sigma_{\lambda}$ is the set of intersections between 
the stable laminations and the line $\ell_{\lambda}$. 
Therefore, since the stable laminations and $\ell_{\lambda}$ both depend smoothly on $\lambda$, we can write 
\begin{equation*}
\Sigma_{\lambda_i} \cap (0, \infty) = \bigsqcup_{n=1}^{\infty} K^{(i)}_{n}(\lambda_i) \ \ (i = 1, 2),  
\end{equation*}
where $K_{n}^{(i)} (\lambda_i)$ are Cantor sets which depend naturally on $\lambda_i$. 
Let us define Borel measures 
$\bar{\nu}_{\lambda_i}^{(n)} \ (i = 1, 2, \ n \in \mathbb{N})$ by  
\begin{equation*}
\bar{\nu}_{\lambda_i}^{(n)} (A) = \nu_{\lambda_i} |_{K_{n}^{(i)}(\lambda_i)} (e^A), 
\end{equation*}
where $A \subset \mathbb{R}$ is a Borel set. 
Then, by Theorem \ref{thmDG12}, for each $(m, n) \in \mathbb{N} \times \mathbb{N}$,  
$\bar{\nu}_{\lambda_1}^{(m)} \ast \, \bar{\nu}_{\lambda_2}^{(n)}$ is absolutely continuous for almost all $(\lambda_1, \lambda_2)$. 
This implies that 
$\bar{\nu}_{\lambda_{1}} \ast \, \bar{\nu}_{\lambda_{2}}$ is absolutely continuous for almost all $(\lambda_1, \lambda_2)$. 
\end{proof}

\section*{Acknowledgements}
The author would like to acknowledge the invaluable contributions of Anton Gorodetski and David Damanik. The author would also 
like to thank May Mai and William Yessen for their many helpful conversations, and to the anonymous referee for many helpful suggestions and remarks.

\end{document}